\documentclass[runningheads]{llncs}
\usepackage[T1]{fontenc}
\usepackage{graphicx}
\usepackage{booktabs}
\usepackage{subcaption}
\usepackage{pifont}
\usepackage{wrapfig}
\usepackage{amssymb}
\usepackage{url}
\usepackage{multicol}
\usepackage[cmex10]{amsmath}
\usepackage{pifont}
\usepackage{setspace}
\usepackage{float}
\usepackage{stfloats}
\usepackage{booktabs}
\usepackage{caption}
\usepackage{multirow}
\usepackage[utf8]{inputenc}
\usepackage{bbding}
\usepackage{pifont}
\usepackage{wasysym}
\usepackage{utfsym}
\usepackage{fontawesome}
\usepackage{threeparttable}
\usepackage{breqn}

\usepackage[justification=centering]{caption}
\usepackage{enumerate}
\usepackage{multirow}
\usepackage[linesnumbered,ruled,vlined]{algorithm2e}
\usepackage{algpseudocode}

\usepackage{setspace}
\usepackage{framed}
\usepackage{makecell}
\usepackage{verbatim}
\usepackage{mathrsfs}
\usepackage{indentfirst}
\usepackage{hyperref}
\begin{document}
\title{BRASP: Boolean Range Queries over Encrypted Spatial Data with Access and Search Pattern Privacy}
\titlerunning{BRASP}
%


\author{
Jing Zhang\inst{1}\textsuperscript{$\dagger$} \and
Ganxuan Yang\inst{1}\textsuperscript{$\dagger$} \and
Yifei Yang\inst{1}\textsuperscript{$\dagger$} \and Siqi Wen\inst{2,1}\textsuperscript{$\dagger$}
\and
Zhengyang Qiu\inst{1}
}

\authorrunning{J. Zhang et al.}

\institute{Lancaster University
\\ Beijing Jiaotong University
}
\maketitle              

\begingroup
\renewcommand\thefootnote{$\dagger$}
\footnotetext{These authors contributed equally to this work.}
\endgroup

\begin{abstract}

Searchable Encryption (SE) enables users to query outsourced encrypted data while preserving data confidentiality. However, most efficient schemes still leak the search pattern and access pattern, which may allow an honest-but-curious cloud server to infer query contents, user interests, or returned records from repeated searches and observed results. Existing pattern-hiding solutions mainly target keyword queries and do not naturally support Boolean range queries over encrypted spatial data.
This paper presents BRASP, a searchable encryption scheme for Boolean range queries over encrypted spatial data. BRASP combines Hilbert-curve-based prefix encoding with encrypted prefix--ID and keyword--ID inverted indexes to support efficient spatial range filtering and conjunctive keyword matching. To hide the search pattern and access pattern under a dual-server setting, BRASP integrates index shuffling for encrypted keyword and prefix entries with ID-field redistribution across two non-colluding cloud servers. BRASP also supports dynamic updates and achieves forward security.
We formalize the security of BRASP through confidentiality, shuffle indistinguishability, query unforgeability, and forward-security analyses, and we evaluate its performance experimentally on a real-world dataset. The results show that BRASP effectively protects query privacy while incurring relatively low computation and communication overhead. To facilitate reproducibility and further research, the source code of BRASP is publicly available at \url{https://github.com/Egbert-Lannister/BRASP}


\keywords{Privacy-preserving,\ Searchable symmetric encryption,\ Access pattern,\ Search pattern,\  Boolean range query.}
\end{abstract}
\section{Introduction}

With the widespread adoption of mobile devices and geolocation technologies, Location-Based Services (LBS) have become an important component of modern data services. In many LBS applications, users issue spatial keyword queries to retrieve objects that satisfy both location and textual constraints. For example, when a user searches for ``coffee shops nearby'' in a map application, the system must identify objects within a spatial range and then filter them according to the requested keywords. At the same time, cloud computing has made data outsourcing a common solution for scalable storage and query processing. Once spatial data are outsourced to an untrusted cloud server, however, preserving data confidentiality while still supporting efficient query processing becomes a central challenge.

Searchable encryption provides an effective way to query encrypted outsourced data. Recent studies have proposed privacy-preserving spatial query schemes that support single-dimensional or multi-dimensional range filtering under encryption \cite{liang2024efficient,sun2023towards,shang2022privacy}. Nevertheless, many of these schemes mainly focus on protecting data contents and query functionality, while leaving side-channel leakages insufficiently addressed. In particular, search pattern leakage reveals whether two trapdoors correspond to the same query, and access pattern leakage reveals which encrypted objects match a query. By observing repeated queries and returned results, an honest-but-curious cloud server may infer sensitive information about user intent, query keywords, or result distributions.

To reduce such leakages, several searchable encryption schemes have been proposed to protect either the search pattern, the access pattern, or both. For example, Tong et al. \cite{tong2024beyond} proposed a verifiable privacy-preserving scheme for Boolean range queries that protects both patterns, and Song et al. \cite{song2021privacy} enhanced access-pattern privacy for spatial keyword similarity search. However, existing solutions still exhibit important limitations. First, many pattern-hiding schemes are designed for keyword queries and do not directly support Boolean range queries over encrypted spatial data \cite{song2020sap,chang2023towards,shang2021obfuscated,xie2024access,chen2024mfsse,zheng2020achieving}. Second, dynamic searchable symmetric encryption (DSSE) has made forward-secure updates increasingly important in practice, but most forward-secure constructions still focus on single-keyword or simple multi-keyword search \cite{song2018forward,wu2019vbtree,zuo2020forward,patranabis2020forward,li2021verifiable,guo2023forward}. Supporting Boolean range queries while simultaneously hiding both patterns and preserving update security remains challenging.

To address these issues, we propose BRASP, a searchable encryption scheme for Boolean range queries over encrypted spatial data. BRASP combines Hilbert-curve-based prefix encoding with encrypted prefix--ID and keyword--ID inverted indexes, enabling efficient evaluation of spatial range predicates and conjunctive keyword conditions. To hide the search pattern and access pattern, BRASP introduces a dual-server design that combines index shuffling with ID-field redistribution, so that the cloud servers cannot stably link repeated trapdoors or query results across searches. In addition, BRASP supports dynamic updates and achieves forward security.

The main contributions of this paper are summarized as follows:
\begin{itemize}
    \item We design BRASP, a searchable encryption scheme that supports Boolean range queries over encrypted spatial data by combining Hilbert-curve-based prefix encoding with encrypted prefix--ID and keyword--ID indexes.
    \item We develop a dual-server pattern-hiding mechanism that integrates index shuffling and ID-field redistribution to protect both the search pattern and the access pattern. The scheme further supports dynamic updates and achieves forward security.
    \item We formalize the security goals of BRASP in terms of confidentiality, shuffle indistinguishability, query unforgeability, and forward security, and we evaluate its efficiency through experiments on a real-world dataset.
\end{itemize}

\section{Related Work}
\label{related_work}

Privacy-preserving spatial query schemes aim to retrieve objects under spatial and keyword constraints without revealing sensitive plaintext information to the cloud server \cite{gong2022efficient,lv2023rask,miao2023comprehensive,xu2018enabling,xu2025pmkr}. Early studies mainly focused on data confidentiality and secure query evaluation, while paying much less attention to the leakage of query patterns. For example, Cui et al. \cite{cui2019geo} designed a privacy-preserving Boolean spatial keyword query scheme based on ASPE, spatial-textual Bloom filters, and an R-tree-based secure index. Wang et al. \cite{wang2020search} encoded spatial locations and keywords into vectors by using Gray codes and bitmaps to support secure spatial keyword queries. Miao et al. \cite{miao2023efficient} proposed a unified encrypted index combined with an improved R-tree, while Zhang et al. \cite{zhang2024performance} combined homomorphic encryption, spatial prefix encoding, and data packaging to support encrypted spatial queries. Although these schemes protect data contents, they usually generate deterministic or linkable trapdoors for repeated queries. As a result, the cloud server may still infer user interests or locations through search-pattern analysis.

To mitigate query-pattern leakage, searchable symmetric encryption (SSE) and private information retrieval (PIR) have been extensively studied. Traditional SSE schemes are efficient for exact keyword search, but they typically leak the search pattern and the access pattern. PIR can hide the identity of retrieved records more thoroughly, but applying generic PIR directly to multi-dimensional spatial queries usually incurs high computational and communication costs. Therefore, spatial query systems require specialized constructions that simultaneously support expressive query functionality and strong pattern-hiding guarantees.

Recent studies have started to address search-pattern and access-pattern leakage more explicitly. Zheng et al. \cite{zheng2020achieving} combined $k$-anonymous obfuscation, pseudo-random functions, and pseudo-random generators to protect search and access patterns for Boolean queries. However, $k$-anonymity only offers probabilistic protection and can still be vulnerable to background-knowledge attacks. Wang et al. \cite{wang2020achieving} used additive homomorphic encryption together with an auxiliary server to hide both patterns, but the heavy use of homomorphic operations leads to substantial overhead on large datasets. Zhang et al. \cite{zhang2022efficient} adopted Bloom filters, Lagrange interpolation, and homomorphic encryption to hide the search pattern, but the interpolation procedure introduces considerable computational cost during query processing. Xie et al. \cite{xie2024access} proposed a dynamic searchable encryption scheme based on distributed point functions and somewhat homomorphic encryption to protect the access pattern, although this design requires intensive inter-server communication. Tong et al. \cite{tong2024beyond} further used a multi-server architecture with distributed point functions and cuckoo hashing to protect access patterns for Boolean range queries.

Compared with these approaches, BRASP is designed for Boolean range queries over encrypted spatial data in a lightweight dual-server setting. Our scheme combines Hilbert-curve-based prefix encoding with encrypted prefix--ID and keyword--ID indexes, and integrates index shuffling with ID-field redistribution to hide both the search pattern and the access pattern. In contrast to prior solutions that rely heavily on generic homomorphic computation or communication-intensive multi-server primitives, BRASP aims to provide a more practical balance between query expressiveness, pattern-hiding security, and search efficiency.

\section{Problem Formulation}
\label{problem_formulation}

\subsection{System Model}

\begin{wrapfigure}{r}{0.5\textwidth}
    \centering
    \includegraphics[width=\linewidth]{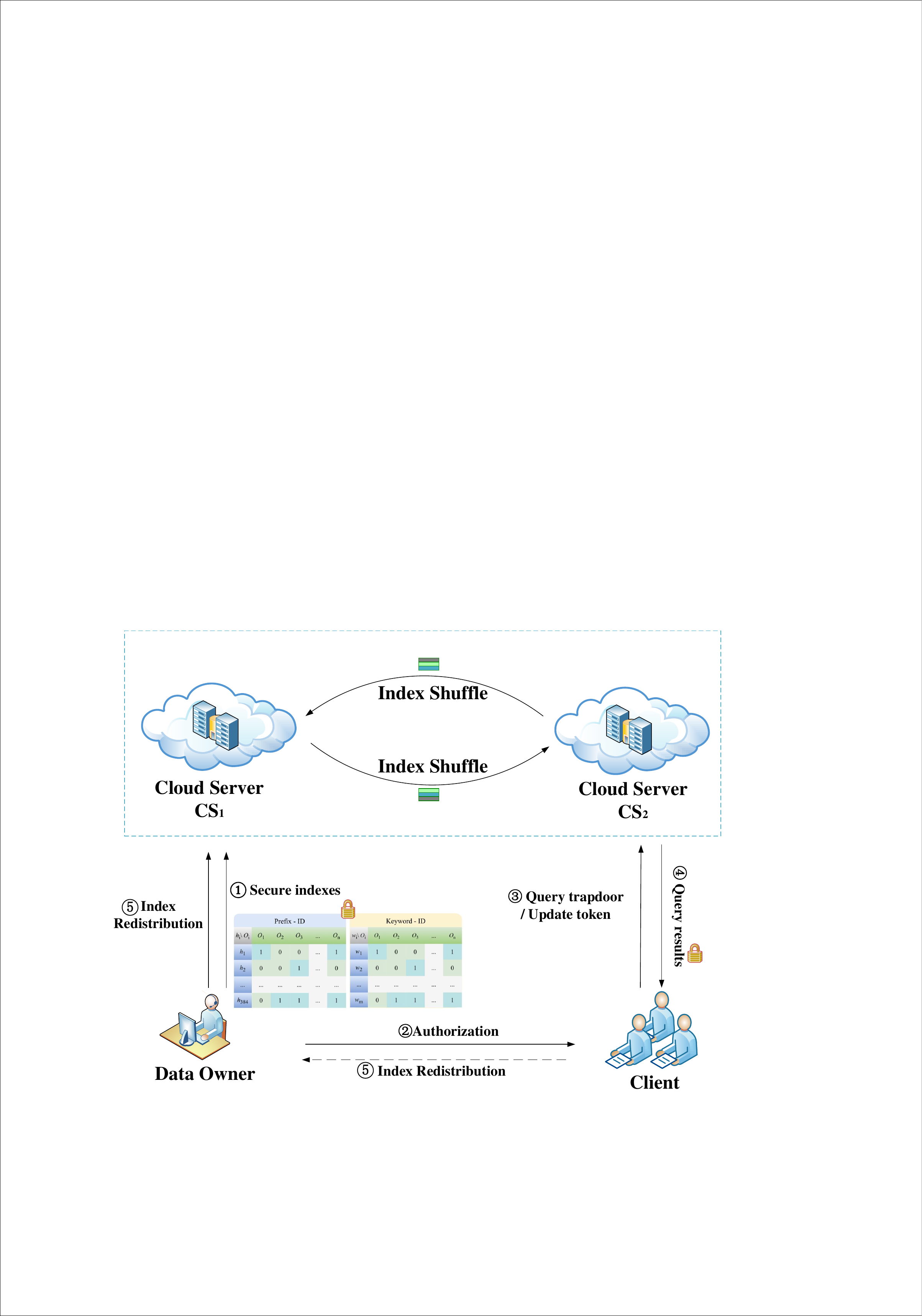}
    \caption{System model.}
    \label{fig:system}
\end{wrapfigure}

As illustrated in Fig.~\ref{fig:system}, the system involves four entities: a data owner (DO), an authorized client, and two non-colluding cloud servers $CS_1$ and $CS_2$.

\noindent\textbf{Data Owner (DO).} The DO owns the spatio-textual database. It initializes the system parameters, encrypts the outsourced objects and index entries, and builds the encrypted prefix--ID and keyword--ID indexes. The DO then splits the encrypted indexes into two shares and outsources them to $CS_1$ and $CS_2$, respectively (Step \ding{172}).

\noindent\textbf{Authorized Client.} The client is authorized by the DO to issue Boolean range queries over the encrypted database. For each query, the client generates keyword and prefix trapdoors and sends them to the two cloud servers. After receiving the encrypted query results, the client combines the partial results, obtains the final answer, and refreshes the corresponding ID-field shares for index redistribution. For updates, the client generates update tokens and uploads them to the cloud servers (Steps \ding{173} and \ding{175}).

\noindent\textbf{Cloud Servers $CS_1$ and $CS_2$.} Each cloud server stores one share of the encrypted indexes and the outsourced encrypted objects. Before the first query and after each subsequent query, $CS_1$ and $CS_2$ jointly execute the index-shuffle procedure to re-randomize index entries and permute their positions. Upon receiving the query trapdoors, the two servers search their local index shares, process the encrypted ID fields according to the protocol, and return the corresponding encrypted results to the client (Step \ding{174}).

\subsection{Threat Model}

We assume that the data owner (DO) and the authorized client are trusted parties that follow the protocol honestly. The DO correctly generates encrypted indexes and encrypted spatial objects, and the client correctly generates query trapdoors and update tokens.

The two cloud servers, $CS_1$ and $CS_2$, are assumed to be non-colluding and honest-but-curious. That is, they follow the prescribed protocols for index shuffling, search, and update processing, but each server attempts to infer additional information from the encrypted indexes, query trapdoors, update tokens, and observed query results available to it. In particular, the adversarial goal is to learn information about query keywords, query ranges, repeated queries, or returned objects beyond the leakage explicitly allowed by the security definition.

We do not consider collusion between $CS_1$ and $CS_2$, nor do we consider side-channel leakages outside the protocol transcript.

\section{Preliminaries}
\label{preliminaries}

This section describes the techniques employed in the proposed scheme.

\subsection{Tailored Proxy Pseudorandom Function (TPF)}
In BRASP, the keyword and prefix fields are re-randomized during index shuffling.
Instead of using a full proxy re-encryption system for these fields, we adopt a lightweight tailored proxy pseudorandom function (TPF) derived from the one-way re-encryption technique in \cite{blaze1998divertible}.
This design reduces the cost of repeated index updates while preserving the re-encryption capability required by the shuffle procedure.
The TPF construction consists of four algorithms: \textit{TPF.KeyGen}, \textit{TPF.Rnd}, \textit{TPF.RecKeyGen}, and \textit{TPF.ReEnc}.
Let $\mathbb{G}$ be a cyclic group of order $q$, and let $F_{\mathbb{G}}:\{0,1\}^*\rightarrow \mathbb{G}$ be a pseudorandom mapping.

\begin{itemize}
    \item \textit{TPF.KeyGen$(1^\lambda)\rightarrow k$}: On input a security parameter $\lambda$, output a secret key $k$.
    \item \textit{TPF.Rnd$(k,m)\rightarrow s$}: On input a secret key $k$ and a message $m$, output the pseudorandom string $s=F_{\mathbb{G}}(m)^k$.
    \item \textit{TPF.RecKeyGen$(k_1,k_2)\rightarrow rk_{1\rightarrow 2}$}: On input two secret keys $k_1$ and $k_2$, output a re-encryption key $rk_{1\rightarrow 2}=k_2/k_1$.
    \item \textit{TPF.ReEnc$(s,rk_{1\rightarrow 2})\rightarrow s'$}: Given as input a pseudorandom string $s$ and a re-encryption key $rk_{1\rightarrow 2}$, the re-encryption algorithm outputs a pseudorandom string $s'$ under the target key. Intuitively, this algorithm converts the pseudorandom string generated with respect to key $k_1$ into a corresponding pseudorandom string associated with key $k_2$, while preserving the underlying message. Formally, if $s = TPF.Rnd(k_1,m)$ for some message $m$, then the output satisfies
$TPF.ReEnc(s,rk_{1\rightarrow 2}) = TPF.Rnd(k_2,m)$.
\end{itemize}



\subsection{Tailored Universal Re-Encryption (TUR)}
In BRASP, each ID field is protected by a tailored universal re-encryption mechanism instantiated from the additive homomorphic Paillier cryptosystem $Paillier=(KeyGen, Enc, Dec, Add)$ \cite{paillier1999public} and the universal re-encryption idea of \cite{golle2004universal}.
The goal of TUR is twofold: it re-randomizes the encrypted bitmap shares during index shuffling, and it prevents either cloud server from decrypting an ID field on its own.
To this end, BRASP employs a two-step decryption procedure.
TUR consists of six algorithms: \textit{TUR.Setup}, \textit{TUR.KeyGen}, \textit{TUR.Enc}, \textit{TUR.ReEnc}, \textit{TUR.PDec}, and \textit{TUR.Dec}.

\begin{itemize}
    \item \textit{TUR.Setup$(1^\lambda)\rightarrow (sk,pk)$}: On input a security parameter $\lambda$, then run $Paillier.KeyGen(1^\lambda)$ and output the master secret key $sk$ and the master public key $pk$.
    \item \textit{TUR.KeyGen$(sk)\rightarrow pdk_1,pdk_2$}: On input the master secret key $sk$, derive two partial decryption keys $pdk_1$ and $pdk_2$.
    \item \textit{TUR.Enc$(m,pk)\rightarrow C$}: On input a plaintext $m$ and the master public key $pk$, output the ciphertext $C=Paillier.Enc(m,pk)$.
    \item \textit{TUR.ReEnc$(C,pk)\rightarrow C'$}: On input a ciphertext $C$ and the master public key $pk$, output the re-randomized ciphertext $C'=C\cdot Paillier.Enc(0,pk)$.
    \item \textit{TUR.PDec$(C,pdk_i)\rightarrow C_i$}: On input a ciphertext $C$ and a partial decryption key $pdk_i$, output the partially decrypted ciphertext $C_i$.
    \item \textit{TUR.Dec$(C_i,pdk_j)\rightarrow m$}: On input a partially decrypted ciphertext $C_i$ and the complementary partial decryption key $pdk_j$, output the plaintext $m$.
\end{itemize}

\subsection{Hilbert Curve}
A Hilbert curve recursively partitions a square region into four smaller regions and connects their centers with a continuous space-filling curve.
In a $d$-dimensional space, if each dimension is evenly partitioned into $2^j$ intervals, then the whole space is partitioned into $2^{dj}$ cells, where $j$ is the order of the Hilbert curve.
The center of each cell is assigned a one-dimensional value, referred to as its \emph{Hilbert value}.
By mapping each spatial object to the Hilbert value of the cell containing it, BRASP converts multidimensional spatial locations into one-dimensional values while largely preserving spatial locality.


\begin{figure}[htbp]
    \centering
    \begin{minipage}[t]{0.48\linewidth}
        \centering
        \includegraphics[width=\linewidth]{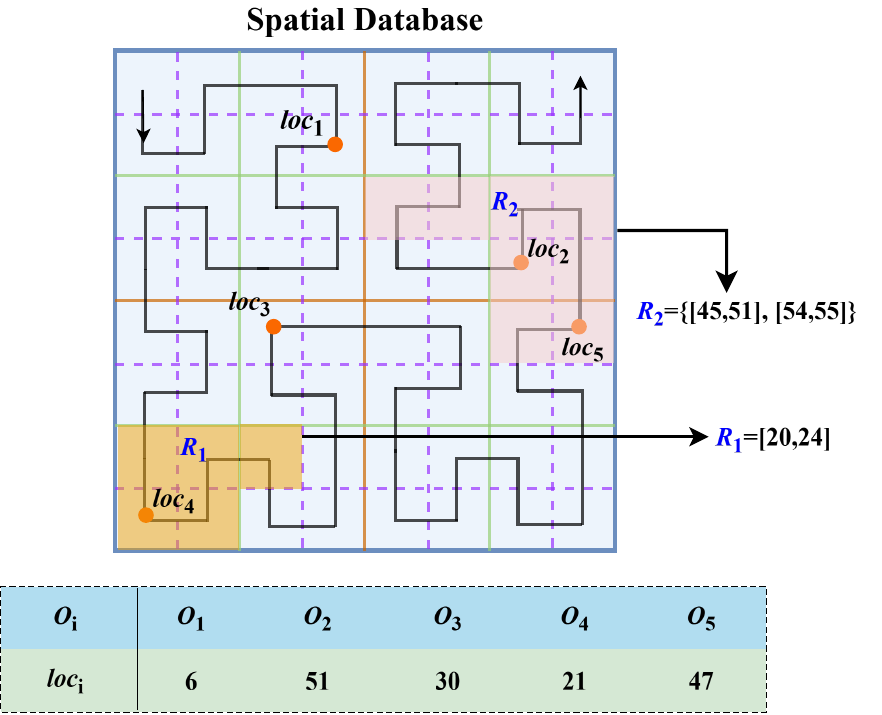}
        \caption{Examples of Hilbert curve.}
        \label{fig:hilbert}
    \end{minipage}
    \hfill
    \begin{minipage}[t]{0.48\linewidth}
        \centering
        \includegraphics[width=\linewidth]{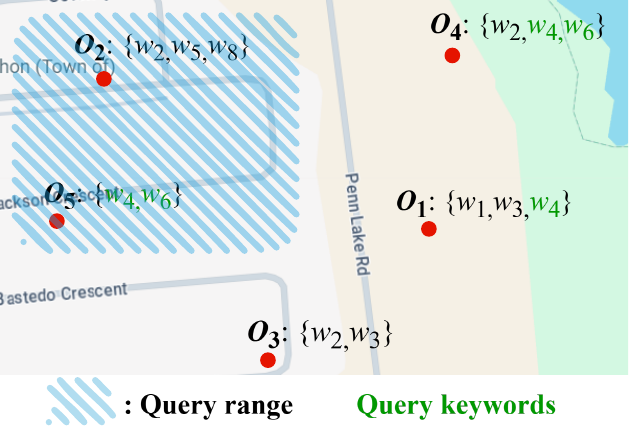}
        \caption{An example of a BRQ.}
        \label{fig:BRQ} 
    \end{minipage}
\end{figure}

\begin{example}\label{exa1}
A basic illustration of a third-order Hilbert curve in a two-dimensional space is shown in Fig.~\ref{fig:hilbert}.
Each spatial object is mapped to a one-dimensional value $loc_i$ by the Hilbert curve.
If the client wishes to retrieve objects in range $R_1$ or $R_2$, it can issue a query with $R_1=[20,24]$ or $R_2=\{[45,51],[54,55]\}$.
\end{example}



\subsection{Prefix Membership Verification Scheme}
To improve search efficiency, BRASP adopts a prefix membership verification scheme.
The key idea is to preprocess Hilbert values into prefix families so that membership in a range cover can be tested efficiently.
Given a $\gamma$-bit Hilbert value $X=a_1a_2\ldots a_\gamma$, its prefix family is
\[
\mathcal{P}(X)=\{a_1a_2\ldots a_\gamma,\ a_1a_2\ldots a_{\gamma-1}*,\ \ldots,\ a_1*\ldots *,\ **\ldots *\}.
\]
The $i$-th prefix element in $\mathcal{P}(X)$ is $a_1a_2\ldots a_{\gamma-i+1}*\ldots *$.
Given a range $[r_{min},r_{max}]$, let $\mathcal{G}([r_{min},r_{max}])$ denote the minimum set of prefix elements whose union covers the range.
If $X\in [r_{min},r_{max}]$, then
\[
\mathcal{P}(X)\cap \mathcal{G}([r_{min},r_{max}])\neq \emptyset.
\]

\begin{figure}[t]
    \centering
    \includegraphics[width=1\linewidth]{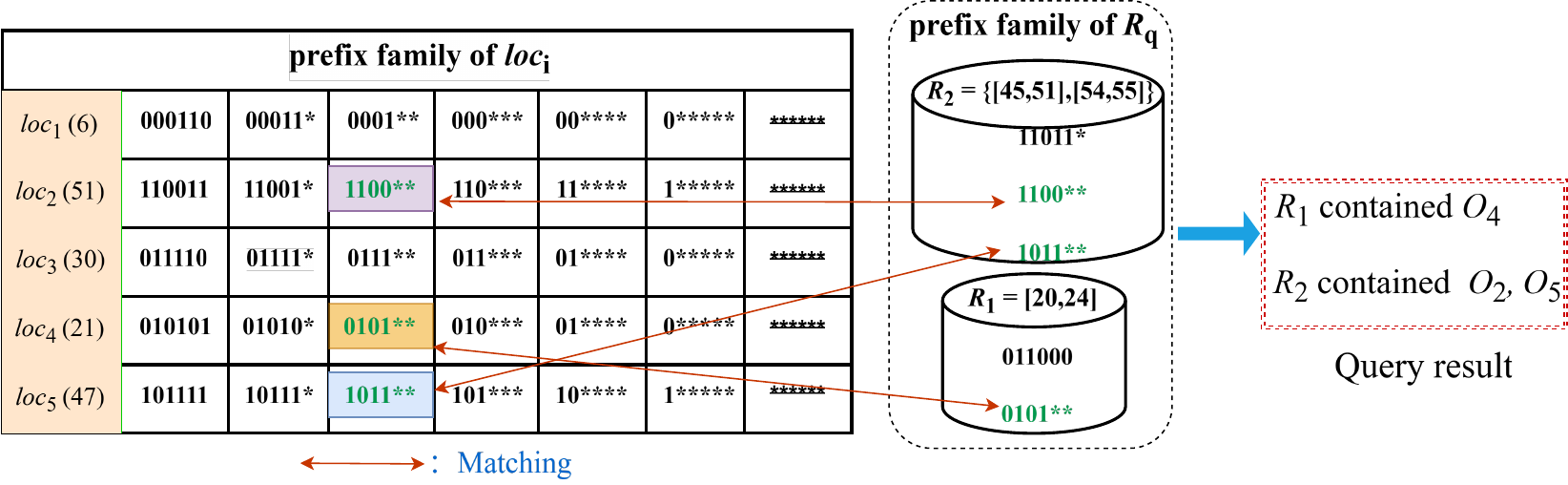}
    \caption{Example of a prefix membership verification scheme for queries.}
    \label{prefix}
\end{figure}

\begin{example}
Fig.~\ref{prefix} illustrates a spatial range query using the Hilbert curve and the prefix membership verification scheme.
We represent the Hilbert values of the five spatial objects in Fig.~\ref{fig:hilbert} as prefix families.
When the client queries all spatial objects in $R_1=\{[20,24],[54,55]\}$, the range is first converted into the minimum prefix cover $\mathcal{G}=\{0101**,011000\}$.
Since $0101**\in \mathcal{P}(loc_4)$, object $O_4$ lies in $R_1$.
To query all spatial objects in $R_2=\{[45,51],[54,55]\}$, the client converts $R_2$ into the minimum prefix cover $\mathcal{G}=\{101**,1100**,11011*\}$.
Because the prefix family of $loc_5$ matches $101**$ and the prefix family of $loc_2$ matches $1100**$, objects $O_2$ and $O_5$ lie in $R_2$.
\end{example}


\subsection{Problem Definition}
Let $DB=\{O_1,\ldots,O_n\}$ be a spatio-textual database owned by DO, where each object is denoted by $O_i=(loc_i, W_i)$, $loc_i$ is the spatial location of $O_i$, and $W_i$ is the set of keywords associated with $O_i$. Given a query $Q=(R_q, W_q)$ issued by an authorized client, $R_q$ denotes the query range and $W_q$ denotes the set of query keywords.

\begin{definition}[Privacy-Preserving Boolean Range Query (PBRQ)] \\
Given an encrypted spatio-textual database $EDB$ and a query token generated from $Q=(R_q, W_q)$, a privacy-preserving Boolean range query returns the set
\[
\Delta(Q)=\{O_i \in DB \mid loc_i \in R_q \ \text{and} \ W_q \subseteq W_i\}.
\]
That is, an object $O_i$ matches the query if and only if its location lies in the query range and it contains all query keywords.
\end{definition}

\begin{example}
An illustration of a Boolean range query is shown in Fig.~\ref{fig:BRQ}. Suppose $DB=\{O_1,\ldots,O_5\}$ and the keyword universe is $W=\{w_1,\ldots,w_8\}$. The client issues a query $Q=(R_q, W_q)$, where $W_q=\{w_4,w_6\}$ and the blue region in Fig.~\ref{fig:BRQ} is the query range $R_q$. The objects $O_2$ and $O_5$ lie in $R_q$, but only $O_5$ contains both $w_4$ and $w_6$. Therefore, the final query result is $\{O_5\}$.
\end{example}

\begin{definition}[BRASP]
Given an encrypted spatio-textual database $EDB$ and a query token generated from $Q=(R_q, W_q)$, BRASP returns a collection of ciphertexts $\{c_i\}_{i=1}^{b}$ corresponding to the objects in
\[
\Delta(Q)=\{O_i \in DB \mid loc_i \in R_q \ \text{and} \ W_q \subseteq W_i\},
\]
while hiding the search pattern and access pattern from the cloud servers. A BRASP scheme consists of the following seven algorithms:
\begin{itemize}
    \item \textbf{Setup$(1^{\lambda}) \rightarrow K$}: On input a security parameter $\lambda$, output the secret key material $K$.
    \item \textbf{EncryptedIndexBuild$(\bar{I}_h, \bar{I}_w, K) \rightarrow (I_h, I_w)$}: On input the plaintext prefix index $\bar{I}_h$, the plaintext keyword index $\bar{I}_w$, and the key material $K$, output two encrypted indexes $I_h$ and $I_w$.
    \item \textbf{IndexShuffle$(I_h, I_w, r_i, r_j) \rightarrow (\widetilde{I}_h, \widetilde{I}_w)$}: On input the encrypted indexes and two shuffle parameters $r_i,r_j$, output shuffled encrypted indexes.
    \item \textbf{TokenGeneration$(Q, K, U_{w_k}, U_{h_k}, r_i, r_j) \rightarrow (T_w, T_h)$}: On input the query $Q$, the key material $K$, the current shuffle states $U_{w_k}$ and $U_{h_k}$, and the shuffle parameters $r_i,r_j$, output the keyword trapdoor set $T_w$ and the prefix trapdoor set $T_h$.
    \item \textbf{Search$(I_h, I_w, T_w, T_h) \rightarrow C(Q)$}: On input the encrypted indexes and the query trapdoors, output the encrypted query result $C(Q)$.
    \item \textbf{IndexRedistribution$(I_h, I_w) \rightarrow (I_h', I_w')$}: On input the encrypted indexes, output refreshed encrypted indexes with redistributed ID fields.
    \item \textbf{Update$(I_h^i, I_w^i, UT_h^i, UT_w^i) \rightarrow (IU_h^i, IU_w^i)$}: On input the encrypted indexes stored at $CS_i$ and the update tokens, output the updated encrypted indexes.
\end{itemize}
\end{definition}

\begin{definition}[Access Pattern]
Let $\mathcal{H}=(Q_1,\ldots,Q_t)$ be a query history over $DB$, where each query is of the form $Q_\ell=(R_\ell, W_\ell)$. Let $\Delta(Q_\ell)$ denote the set of objects matching $Q_\ell$. The access pattern of $\mathcal{H}$ is defined as
\[
\alpha(\mathcal{H})=(\Delta(Q_1),\Delta(Q_2),\ldots,\Delta(Q_t)).
\]
Equivalently, it can be represented as a $t \times n$ binary matrix such that
\begin{equation}
    \alpha(\mathcal{H})[\ell,j]=
    \begin{cases}
        1, & \text{if } O_j \in \Delta(Q_\ell),\\
        0, & \text{otherwise.}
    \end{cases}
\end{equation}
The access pattern reveals which encrypted objects are returned for each query in the query history.
\end{definition}

\begin{definition}[Search Pattern]
Let $\mathcal{H}=(Q_1,\ldots,Q_t)$ be a query history, where each $Q_\ell=(R_\ell, W_\ell)$. The search pattern of $\mathcal{H}$ is defined as the $t \times t$ binary matrix
\begin{equation}
    \sigma(\mathcal{H})[\ell,m]=
    \begin{cases}
        1, & \text{if } Q_\ell = Q_m,\\
        0, & \text{otherwise.}
    \end{cases}
\end{equation}
That is, the search pattern reveals whether two query trapdoors correspond to the same Boolean range query.
\end{definition}

\section{Scheme Construction}
\label{scheme_construction}

In this section, we initially present the construction of BRASP scheme. 
Next, we give a detailed construction of our scheme. 

\subsection{Index Construction}

To support efficient search over encrypted spatio-textual data, BRASP builds two plaintext inverted indexes before encryption: a \emph{prefix--ID index} $\bar{I}_h$ and a \emph{keyword--ID index} $\bar{I}_w$.
An entry of $\bar{I}_h$ is of the form $(h,B_h)$, where $h$ is a prefix element and $B_h$ is an $n$-bit bitmap whose $i$-th bit is $1$ if and only if the location of $O_i$ is covered by $h$.
An entry of $\bar{I}_w$ is of the form $(w,B_w)$, where $w$ is a keyword and $B_w[i]=1$ if and only if $w\in W_i$.
These two indexes are illustrated in Fig.~\ref{index}.

For the prefix--ID index, we use a third-order Hilbert curve.
This setting yields 384 possible prefix elements for the Hilbert values in the database, excluding the all-wildcard string ``******''.
To hide query patterns, each bitmap is split into two shares, and the resulting index shares are stored separately at $CS_1$ and $CS_2$.
Consequently, each cloud server maintains one sub-prefix--ID index and one sub-keyword--ID index.

After a query is processed, the client refreshes the corresponding ID-field shares during index redistribution so that the two servers do not retain a stable view of the same search result.
During the subsequent index-shuffle phase, the keyword field, prefix field, and ID field are re-randomized, and the entries are randomly permuted.
As a result, both the ciphertexts and the physical positions of index entries change across queries, which helps conceal both the search pattern and the access pattern.

    

\begin{figure}[t]
    \centering
    \begin{minipage}[t]{0.48\columnwidth}
    \centering
    \includegraphics[width=\linewidth]{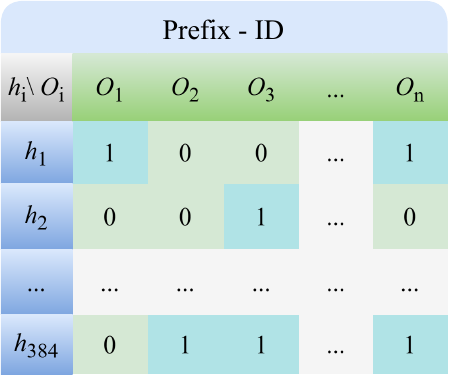}
    \end{minipage}
    \begin{minipage}[t]{0.48\columnwidth}
        \centering
        \includegraphics[width=\linewidth]{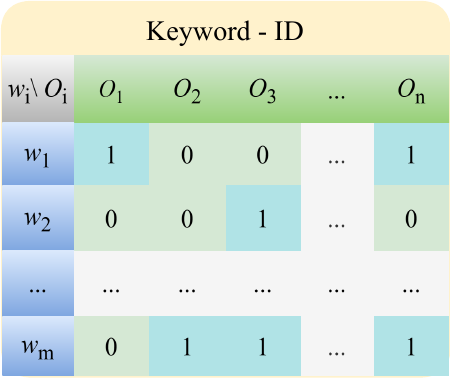}
    \end{minipage}
    \caption{Two inverted indexes of spatio-textual database:(a):Prefix-ID inverted index and (b):Keyword-ID inverted index }
    \label{index}
\end{figure} 

\subsection{Our Proposed Scheme}


In this subsection, we give the detailed construction of BRASP.
For consistency, we use the following notation throughout.
The key $k_M$ is the master key used to encode the keyword and prefix fields.
The pair $(pk_M^{ID},sk_M^{ID})$ is the master public/secret key pair for the ID field.
The values $sk_1^{ID}$ and $sk_2^{ID}$ are the two partial decryption keys derived from $sk_M^{ID}$.
The value $k_u$ is the secret key of an authorized client, and $rk_{u\rightarrow M}$ denotes the authorization re-encryption key from $k_u$ to $k_M$.
A superscript $i\in\{1,2\}$ indicates that the corresponding index or ciphertext share is stored at $CS_i$.

\begin{algorithm}[h]
{\footnotesize{
\caption{Setup}
\label{A0}
    \KwIn{security parameter $\lambda$, $TPF$, $TUR$;}
    \KwOut{system parameters and secret material for DO, the client, $CS_1$, and $CS_2$;}

    $k_M \gets TPF.KeyGen(1^\lambda)$; 

    $(sk_M^{ID}, pk_M^{ID}) \gets TUR.Setup(1^\lambda)$;

    $(sk_1^{ID}, sk_2^{ID}) \gets TUR.KeyGen(sk_M^{ID})$;

    $r_1, r_2, k_T, k_O \overset{\$}{\leftarrow} \mathbb{Z}^{+}$;

    send $(pk_M^{ID}, r_1, sk_1^{ID})$ to $CS_1$;

    send $(pk_M^{ID}, r_2, sk_2^{ID})$ to $CS_2$;

    send $(r_1, r_2, pk_M^{ID}, k_T, k_O)$ to the client;
}}
\end{algorithm}

\paragraph{\textbf{Setup.}}
The setup procedure is given in Algorithm~\ref{A0}.
DO first generates the master key $k_M$ for the keyword and prefix fields and the master key pair $(pk_M^{ID},sk_M^{ID})$ for the ID field.
The master secret material $(k_M,sk_M^{ID})$ is retained by DO and is never disclosed to either cloud server.
DO then derives the two partial decryption keys $sk_1^{ID}$ and $sk_2^{ID}$, samples the shuffle parameters $r_1,r_2$, the tag-derivation key $k_T$, and the object-encryption key $k_O$, and distributes only the information required by each party.
For an authorized client with secret key $k_u$, DO additionally computes the authorization key $rk_{u\rightarrow M}=TPF.RecKeyGen(k_u,k_M)$ and sends it to the cloud servers so that client-generated tokens can later be transformed into tokens matching the encrypted indexes.









\begin{algorithm}[t]
{\footnotesize{
\caption{Encrypted Index Build}
\label{A1}
    \KwIn{plaintext prefix index $\bar{I}_h=\{(h_i,id_h^i)\}_{i=1}^{p}$, plaintext keyword index $\bar{I}_w=\{(w_i,id_w^i)\}_{i=1}^{m}$, $k_M$, $pk_M^{ID}$, $TPF$, $TUR$;}
    \KwOut{encrypted indexes $I_h, I_w$;}

    $I_h \gets \emptyset$, $I_w \gets \emptyset$;
    \ForEach{$(h_i, id_h^i) \in \bar{I}_h$}{
          $hs_i \gets TPF.Rnd(k_M, h_i)$; 
          $ID_{h_i} \gets TUR.Enc(id_h^i, pk_M^{ID})$;
          $I_h.add(hs_i, ID_{h_i})$;
    }
    \ForEach{$(w_i, id_w^i) \in \bar{I}_w$}{
          $ws_i \gets TPF.Rnd(k_M, w_i)$; 
          $ID_{w_i} \gets TUR.Enc(id_w^i, pk_M^{ID})$;
          $I_w.add(ws_i, ID_{w_i})$;
    }
    return $(I_h, I_w)$;
}}
\end{algorithm}

\begin{algorithm}
{\footnotesize{
\caption{Index Shuffle in $CS_1$}
\label{A2}
    \KwIn{encrypted indexes $I_h^1$ and $I_w^1$ stored on $CS_1$, $pk_M^{ID}$, $r_1$, $r_2$, $TPF$, $TUR$;}
    \KwOut{shuffled encrypted indexes $\widetilde{I}_h^1$, $\widetilde{I}_w^1$;}

    \textbf{Cloud $CS_1$:} send $(I_h^1, I_w^1)$ to $CS_2$; 

    \textbf{Cloud $CS_2$:}
    \ForEach{$k \in \{h,w\}$}{
        $IN_k \gets \emptyset$;
        \ForEach{$(s, ID, \tau) \in I_k^1$}{
            $s' \gets TPF.ReEnc(s, r_2)$;
            $ID' \gets TUR.ReEnc(ID, pk_M^{ID})$;
            $\tau' \gets \tau + 1$;
            $IN_k.insert(s', ID', \tau')$;
        }
        randomly permute $IN_k$;
    }
    send $(IN_h, IN_w)$ to $CS_1$;

    \textbf{Cloud $CS_1$:}
    \ForEach{$k \in \{h,w\}$}{
        $\widetilde{I}_k^1 \gets \emptyset$;
        \ForEach{$(s', ID', \tau') \in IN_k$}{
            $\widehat{s} \gets TPF.ReEnc(s', r_1)$;
            $\widehat{ID} \gets TUR.ReEnc(ID', pk_M^{ID})$;
            $\widehat{\tau} \gets \tau' + 1$;
            $\widetilde{I}_k^1.insert(\widehat{s}, \widehat{ID}, \widehat{\tau})$;
        }
        randomly permute $\widetilde{I}_k^1$;
    }
    return $(\widetilde{I}_h^1, \widetilde{I}_w^1)$;
}}
\end{algorithm}

\paragraph{\textbf{Encrypted Index Build.}}
Algorithm~\ref{A1} transforms the plaintext prefix index $\bar{I}_h$ and plaintext keyword index $\bar{I}_w$ into their encrypted counterparts.
For each entry, BRASP computes a pseudorandom label using $TPF$ and encrypts the corresponding bitmap share using $TUR$.
The resulting encrypted indexes are denoted by $I_h$ and $I_w$.

\begin{figure*}[h]
    \centering
    \includegraphics[width=1\linewidth]{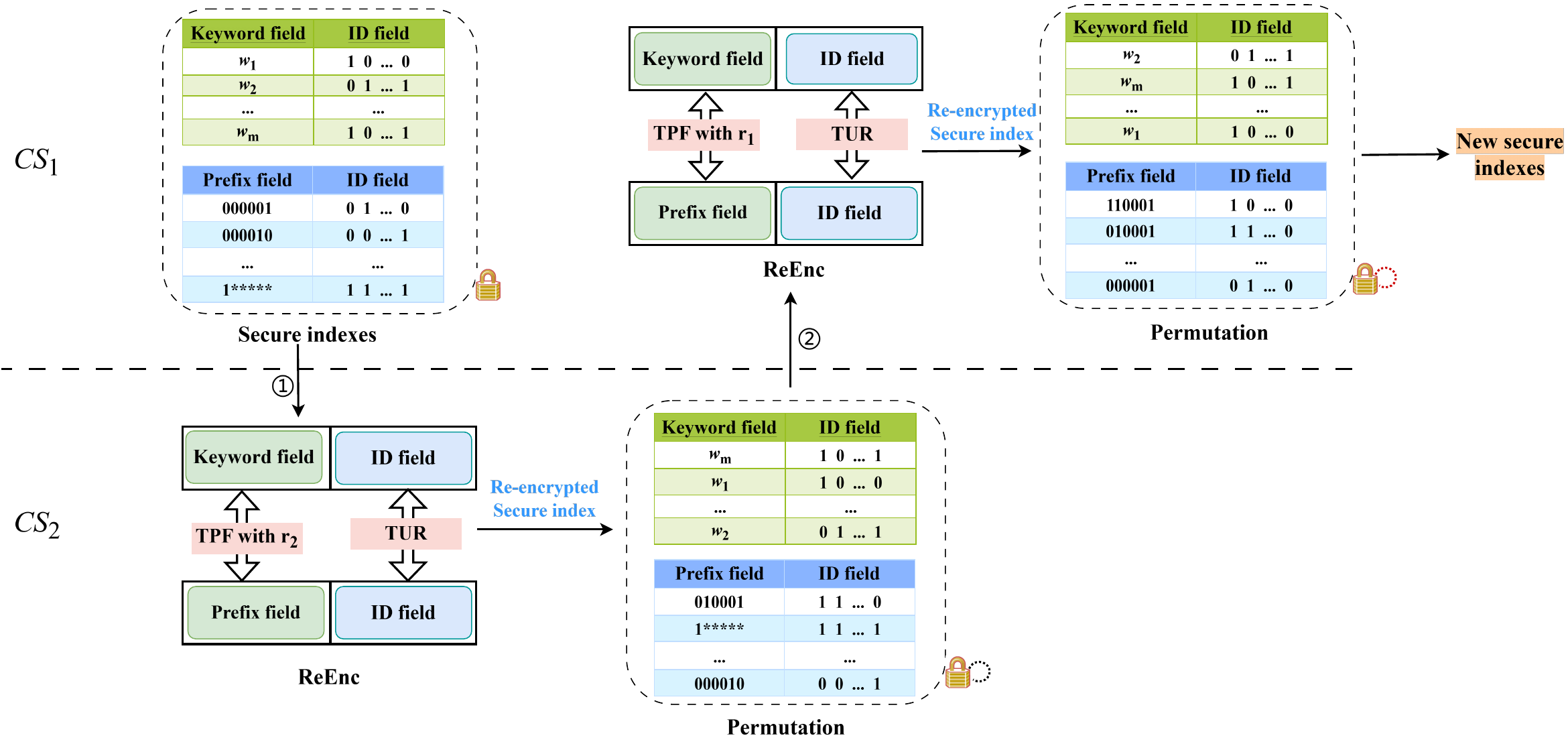}
    \caption{The process of shuffling the indexes in $CS_1$}
    \label{index shuffle}
\end{figure*}

\paragraph{\textbf{Index Shuffle.}}
To hide both the access pattern and the search pattern, BRASP re-randomizes and permutes the encrypted indexes after initialization and after each search.
The two cloud servers jointly execute the shuffle procedure.
Algorithm~\ref{A2} illustrates the shuffle procedure for the index shares stored at $CS_1$; the procedure for the shares stored at $CS_2$ is symmetric.
As shown in Fig.~\ref{index shuffle}, the keyword and prefix labels are re-encrypted using $TPF.ReEnc$, the ID fields are re-randomized using $TUR.ReEnc$, and the entries are then randomly permuted.
The state tag associated with each entry is incremented after every shuffle so that the client can generate fresh search and update tokens consistent with the current shuffled view.





\begin{algorithm}[t]
{\footnotesize{
\caption{Token Generation}
\label{A3}
    \KwIn{client key $k_u$, keyword set $W_q$, query prefix family $QP$, shuffle states $\{U_{w_k}\}_{w_k\in W_q}$ and $\{U_{h_k}\}_{h_k\in QP}$, $r_1$, $r_2$, $TPF$;}
    \KwOut{keyword trapdoor set $T_w$ and prefix trapdoor set $T_h$;}
    $T_w \gets \emptyset$, $T_h \gets \emptyset$;

    \ForEach{$w_k \in W_q$}{
          $t_k \gets k_u \cdot (r_1r_2)^{U_{w_k}}$;
          $T_{w_k} \gets TPF.Rnd(t_k, w_k)$;
          $T_w \gets T_w \cup \{T_{w_k}\}$;
    }
    \ForEach{$h_k \in QP$}{
          $v_k \gets k_u \cdot (r_1r_2)^{U_{h_k}}$;
          $T_{h_k} \gets TPF.Rnd(v_k, h_k)$;
          $T_h \gets T_h \cup \{T_{h_k}\}$;
    }
    return $(T_w, T_h)$;
}}
\end{algorithm}

\paragraph{\textbf{Token Generation.}}
Algorithm~\ref{A3} shows how the client generates the search tokens.
For each query keyword $w_k\in W_q$ and each query prefix $h_k\in QP$, the client derives a shuffle-aware key from $k_u$, $r_1$, $r_2$, and the current shuffle state.
It then computes a token using $TPF.Rnd$.
During authorization, DO provides the cloud servers with the re-encryption key $rk_{u\rightarrow M}=TPF.RecKeyGen(k_u,k_M)$, allowing the servers to transform client-generated tokens into tokens matching the current encrypted indexes.

\paragraph{\textbf{Search.}}
Algorithm~\ref{A4} describes the search phase jointly executed by the two cloud servers.
After receiving the trapdoors $(T_w,T_h)$, $CS_2$ re-encrypts them with $rk_{u\rightarrow M}$ so that they match the labels in the current encrypted indexes.
It then partially decrypts the matched ID fields and sends the partially decrypted bitmaps to $CS_1$.
Finally, $CS_1$ completes the decryption, reconstructs the matching object sets, unions the prefix matches, intersects the keyword matches, and outputs the encrypted query result.

\begin{algorithm}
{\footnotesize{
\caption{Search and Result Recovery}
\label{A4}
    \KwIn{trapdoor sets $T_w$ and $T_h$, encrypted indexes $I_h^2$ and $I_w^2$ stored on $CS_2$, encrypted object set $C_1$ stored on $CS_1$, partial decryption keys $sk_1^{ID}$ and $sk_2^{ID}$, authorization key $rk_{u\rightarrow M}$, $TPF$, $TUR$;}
    \KwOut{encrypted result set $C_1(Q)$;}

    \textbf{Cloud $CS_2$:}
    $SI_h^2 \gets \emptyset$, $SI_w^2 \gets \emptyset$;
    \ForEach{$T_{h_k} \in T_h$}{
          $T_{h_k}^M \gets TPF.ReEnc(T_{h_k}, rk_{u\rightarrow M})$;
          $\widehat{ID}_{h_k}^2 \gets I_h^2.find(T_{h_k}^M)$;
          $ID_{h_k}^2 \gets TUR.PDec(\widehat{ID}_{h_k}^2, sk_2^{ID})$;
          $SI_h^2 \gets SI_h^2 \cup \{ID_{h_k}^2\}$;
    }
    \ForEach{$T_{w_k} \in T_w$}{
          $T_{w_k}^M \gets TPF.ReEnc(T_{w_k}, rk_{u\rightarrow M})$;
          $\widehat{ID}_{w_k}^2 \gets I_w^2.find(T_{w_k}^M)$;
          $ID_{w_k}^2 \gets TUR.PDec(\widehat{ID}_{w_k}^2, sk_2^{ID})$;
          $SI_w^2 \gets SI_w^2 \cup \{ID_{w_k}^2\}$;
    }
    send $(SI_h^2, SI_w^2)$ to $CS_1$;

    \textbf{Cloud $CS_1$:}
    $C_h^1 \gets \emptyset$;
    $C_w^1 \gets C_1$;
    \ForEach{$ID_{h_k}^2 \in SI_h^2$}{
          $B_{h_k} \gets TUR.Dec(ID_{h_k}^2, sk_1^{ID})$;
          $S_{h_k} \gets \mathrm{FindByBitmap}(B_{h_k}, C_1)$;
          $C_h^1 \gets C_h^1 \cup S_{h_k}$;
    }
    \ForEach{$ID_{w_k}^2 \in SI_w^2$}{
          $B_{w_k} \gets TUR.Dec(ID_{w_k}^2, sk_1^{ID})$;
          $S_{w_k} \gets \mathrm{FindByBitmap}(B_{w_k}, C_1)$;
          $C_w^1 \gets C_w^1 \cap S_{w_k}$;
    }
    $C_1(Q) \gets C_h^1 \cap C_w^1$;
    return $C_1(Q)$;
}}
\end{algorithm}

\paragraph{\textbf{Index Redistribution.}}
After each search, the client aggregates the encrypted results returned from the two cloud servers, re-splits the corresponding object identifiers into two fresh bitmap shares, and sends the refreshed ID fields back to $CS_1$ and $CS_2$.
This step refreshes each server's local view of the search result and prevents either server from linking the redistributed ID fields to the previous result view.
The refreshed ID fields are re-randomized again during the next index-shuffle phase.

\paragraph{\textbf{Update.}}
The update procedure consists of two phases: client-side update-token generation and server-side index refresh.
The client first generates update tokens for a newly inserted object, and the two cloud servers then update the encrypted indexes without learning the update positions in plaintext.

\paragraph{Client-side update-token generation.}
Algorithm~\ref{A5} generates update tokens for a newly inserted object $O_{n+1}=(loc_{n+1},W_{n+1})$.
Let $P(O_{n+1})$ denote the set of prefix elements derived from $loc_{n+1}$.
For each prefix or keyword, the client creates two bitmap shares, encrypts them with $TUR$, and derives the corresponding labels and tags.
Existing keywords and prefixes are mapped to shuffled positions using $P_k$ and the current shuffle states, whereas a newly appearing keyword is inserted directly as a fresh entry.
For clarity, we use domain-separated tags $F(k_T,(1,x))$ and $F(k_T,(2,x))$ for the two cloud views of the same logical entry.

\begin{algorithm}
{\footnotesize{
\caption{Update Token Generation}
\label{A5}
    \KwIn{client key $k_u$, new object $O_{n+1}$, keyword state set $\{U_{w_k}\}_{w_k\in W_{n+1}}$, prefix state set $\{U_{h_k}\}_{h_k\in P(O_{n+1})}$, tag key $k_T$, pseudorandom function $F$, permutation function $P_k$, $r_1$, $r_2$, $pk_M^{ID}$, $TPF$, $TUR$;}
    \KwOut{update-token sets $UT_O^{h_1}$, $UT_O^{h_2}$, $UT_O^{w_1}$, $UT_O^{w_2}$;}
    $UT_O^{h_1},UT_O^{h_2},UT_O^{w_1},UT_O^{w_2} \gets \emptyset$;

    \ForEach{$h_k \in P(O_{n+1})$}{
          $B_{h_k}^1 \gets 0$, $B_{h_k}^2 \gets 0$;
          $b \overset{\$}{\leftarrow} \{1,2\}$; $B_{h_k}^{b}[ID(O_{n+1})] \gets 1$;
          $ID_{h_k}^1 \gets TUR.Enc(B_{h_k}^1, pk_M^{ID})$;
          $ID_{h_k}^2 \gets TUR.Enc(B_{h_k}^2, pk_M^{ID})$;
          $\tau_{h_k}^1 \gets F(k_T,(1,h_k))$; $\tau_{h_k}^2 \gets F(k_T,(2,h_k))$;
          \For{$j\gets 1$ \KwTo $U_{h_k}$}{
              $\tau_{h_k}^1 \gets F(F(\tau_{h_k}^1,r_2),r_1)$;
              $\tau_{h_k}^2 \gets F(F(\tau_{h_k}^2,r_1),r_2)$;
          }
          $p_1 \gets P_k(\tau_{h_k}^1)$; $p_2 \gets P_k(\tau_{h_k}^2)$;
          $UT_O^{h_1}.insert(p_1, ID_{h_k}^1, \bot)$;
          $UT_O^{h_2}.insert(p_2, ID_{h_k}^2, \bot)$;
    }

    \ForEach{$w_k \in W_{n+1}$}{
          $B_{w_k}^1 \gets 0$, $B_{w_k}^2 \gets 0$;
          $b \overset{\$}{\leftarrow} \{1,2\}$; $B_{w_k}^{b}[ID(O_{n+1})] \gets 1$;
          $ws_k \gets TPF.Rnd(k_u, w_k)$;
          $ID_{w_k}^1 \gets TUR.Enc(B_{w_k}^1, pk_M^{ID})$;
          $ID_{w_k}^2 \gets TUR.Enc(B_{w_k}^2, pk_M^{ID})$;
          $\tau_{w_k}^1 \gets F(k_T,(1,w_k))$; $\tau_{w_k}^2 \gets F(k_T,(2,w_k))$;
          \If{$U_{w_k} > 0$}{
              \For{$j\gets 1$ \KwTo $U_{w_k}$}{
                  $\tau_{w_k}^1 \gets F(F(\tau_{w_k}^1,r_2),r_1)$;
                  $\tau_{w_k}^2 \gets F(F(\tau_{w_k}^2,r_1),r_2)$;
              }
              $l_1 \gets P_k(\tau_{w_k}^1)$; $l_2 \gets P_k(\tau_{w_k}^2)$;
              $UT_O^{w_1}.insert(l_1, ID_{w_k}^1, \bot)$;
              $UT_O^{w_2}.insert(l_2, ID_{w_k}^2, \bot)$;
          }
          \Else{
              $UT_O^{w_1}.insert(ws_k, ID_{w_k}^1, \tau_{w_k}^1)$;
              $UT_O^{w_2}.insert(ws_k, ID_{w_k}^2, \tau_{w_k}^2)$;
          }
    }
    return $(UT_O^{h_1}, UT_O^{h_2}, UT_O^{w_1}, UT_O^{w_2})$;
}}
\end{algorithm}

\paragraph{Server-side update for the keyword--ID index at $CS_1$.}
Algorithm~\ref{A6} shows how $CS_1$ and $CS_2$ collaboratively refresh the keyword--ID index stored at $CS_1$.
$CS_1$ first permutes the stored entries and hides their labels, after which $CS_2$ applies the update tokens to the permuted view.
Because $CS_1$ does not know the permuted update positions and $CS_2$ never observes the original index order, the update process preserves forward security.
Here $\oplus$ denotes the homomorphic combination of encrypted bitmap shares supported by the underlying Paillier-based construction.
The update procedure for the prefix--ID index is analogous.

\begin{algorithm}
{\footnotesize{
\caption{Update the Keyword--ID Index at $CS_1$}
\label{A6}
    \KwIn{keyword update-token set $UT_O^{w_1}$, encrypted object $c$, encrypted object set $C_1$ stored on $CS_1$, authorization key $rk_{u\rightarrow M}$, encrypted keyword index $I_w^1$ stored on $CS_1$, permutation function $P_k$, $pk_M^{ID}$, $TPF$, $TUR$;}
    \KwOut{updated keyword index $\widetilde{I}_w^1$ and updated encrypted object set $\widetilde{C}_1$;}

    \textbf{Cloud $CS_1$:}
    $L_w \gets \emptyset$;
    \ForEach{$(ws, ID_w^1, \tau_w^1) \in I_w^1$}{
          $p \gets P_k(\tau_w^1)$;
          $L_w[p] \gets TUR.ReEnc(ID_w^1, pk_M^{ID})$;
    }
    send $L_w$ to $CS_2$;

    \textbf{Cloud $CS_2$:}
    \ForEach{$(p,\cdot) \in L_w$}{
          \If{there exists $(p,\Delta ID,\cdot) \in UT_O^{w_1}$}{
              $L_w[p] \gets L_w[p] \oplus \Delta ID$;
              remove $(p,\Delta ID,\cdot)$ from $UT_O^{w_1}$;
          }
          \Else{
              $L_w[p] \gets TUR.ReEnc(L_w[p], pk_M^{ID})$;
          }
    }
    send $(L_w, UT_O^{w_1})$ to $CS_1$;

    \textbf{Cloud $CS_1$:}
    $\widetilde{I}_w^1 \gets \emptyset$;
    \ForEach{$(ws, ID_w^1, \tau_w^1) \in I_w^1$}{
          $p \gets P_k(\tau_w^1)$;
          $\widetilde{I}_w^1.insert(ws, L_w[p], \tau_w^1)$;
    }
    \ForEach{$(ws, ID_w^1, \tau_w^1) \in UT_O^{w_1}$}{
          $R_w \gets TPF.ReEnc(ws, rk_{u\rightarrow M})$;
          $\widetilde{I}_w^1.insert(R_w, ID_w^1, \tau_w^1)$;
    }
    $\widetilde{C}_1 \gets C_1.insert(c)$;
    return $(\widetilde{I}_w^1, \widetilde{C}_1)$;
}}
\end{algorithm}

\section{Security Analysis}
\label{security_analysis}

In this section, we analyze BRASP with respect to four properties:
confidentiality, shuffle indistinguishability, query unforgeability, and forward security.
For readability, the main text states each notion, theorem, and proof sketch, whereas the
full arguments are given in Appendix~\ref{proof_th1}, \ref{proof_th2}, \ref{proof_th3} and \ref{proof_th4}.

\subsection{Confidentiality}

Confidentiality requires that an honest-but-curious cloud server should learn no information
about the plaintext objects, queried keywords, or queried ranges beyond the explicitly allowed
leakage. We formalize this property in the standard real-world/ideal-world framework with
respect to a leakage function collection
$\mathcal{L}=(L^{\mathsf{Query}},L^{\mathsf{Update}})$.

Let $\mathsf{Real}_{\mathcal A}(\zeta)$ denote the experiment in which a probabilistic
polynomial-time adversary $\mathcal A$ interacts with the real BRASP protocol, and let
$\mathsf{Ideal}_{\mathcal A,\mathcal S}(\zeta)$ denote the experiment in which
$\mathcal A$ interacts with a simulator $\mathcal S$ that receives only the leakage specified
by $\mathcal L$. BRASP is said to be $\mathcal{L}$-confidential against adaptive
chosen-keyword attacks if, for every PPT adversary $\mathcal A$, there exists a PPT simulator
$\mathcal S$ such that
\begin{equation}
\left|
\Pr[\mathsf{Real}_{\mathcal A}(\zeta)=1]-
\Pr[\mathsf{Ideal}_{\mathcal A,\mathcal S}(\zeta)=1]
\right|
\leq \mathsf{negl}(\zeta).
\end{equation}

\begin{theorem}
BRASP is $\mathcal{L}$-confidential against adaptive chosen-keyword attacks if $TPF$ and $F$
are secure pseudorandom functions.
\end{theorem}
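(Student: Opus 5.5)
We prove the theorem by the standard simulation-plus-hybrids route: we exhibit a simulator $\mathcal S$ that receives only $\mathcal L=(L^{\mathsf{Query}},L^{\mathsf{Update}})$ and produces the view of each non-colluding server, and we then bound the distance between $\mathsf{Real}_{\mathcal A}$ and $\mathsf{Ideal}_{\mathcal A,\mathcal S}$ by a short chain of hybrids. Since the servers do not collude, we treat the view of $CS_1$ and the view of $CS_2$ separately. For each server the leakage we allow is the public sizes (the number of index entries $p+m$, the bitmap length $n$, the number of trapdoors $|T_w|,|T_h|$ per query, and the number of entries touched per update) and nothing about repetitions of queries.

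\textbf{Simulator.} At setup, $\mathcal S$ outputs an index of $p+m$ entries. Each label is a uniformly random group element of $\mathbb G$. Each ID field is a Paillier encryption of $0$ under $pk_M^{ID}$, and each tag is a random string.

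On a query, $\mathcal S$ sends $CS_2$ fresh random group elements as trapdoors. It programs the ``find'' results by pointing each trapdoor at a uniformly chosen position of the current freshly permuted index. This is consistent with the real world because, after each shuffle, the real position of a matched entry is uniform and independent of previous rounds.

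For the view of $CS_1$, $\mathcal S$ hands over partial decryptions that decrypt to random bitmap shares. Under the redistribution step these shares are uniformly distributed, so what $CS_1$ sees is only the size of the final intersection, which is in the leakage.

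Shuffle transcripts are simulated as re-randomized encryptions of $0$ with random labels and random permutations. Update tokens are simulated as random positions and labels together with encryptions of $0$.

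\textbf{Hybrids.} We move from $\mathsf{Real}$ to $\mathsf{Ideal}$ in the following order.
\begin{enumerate}
\item $H_1$: replace every evaluation of $F$ (the tags $F(k_T,\cdot)$ and the iterated tag chains) by a truly random function. This step is justified by PRF security of $F$, via a reduction that answers tag queries through its oracle. Each iteration of the chain is then a fresh random value, so the positions $P_k(\tau)$ sent in update tokens are unlinkable to earlier states.
\item $H_2$: replace $TPF.Rnd(k,\cdot)=F_{\mathbb G}(\cdot)^k$ for each effective key $k_M\cdot(r_1r_2)^{U}$ by a random function. This step is justified by the pseudorandomness of $TPF$; it yields unlinkable trapdoors and labels across shuffle epochs.
\item $H_3$: replace every Paillier ID field and every partial decryption seen by a single server by encryptions of $0$ or random shares. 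This step uses semantic security of Paillier/TUR together with the fact that one partial key reveals nothing.
\end{enumerate}
The last hybrid is exactly $\mathsf{Ideal}$.

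\textbf{Main obstacle.} The hardest step is $H_2$. Keys are related multiplicatively: the $k_u$-based tokens, the re-encryption key $rk_{u\to M}$, and the known factors $r_1,r_2$ are all tied together. Worse, $CS_2$ holds $rk_{u\to M}$ and $r_2$ and can itself raise values to these exponents.

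To handle this, we will argue in the random-oracle model for $F_{\mathbb G}$ under DDH in $\mathbb G$. Every element the adversary sees is of the form $F_{\mathbb G}(x)^{k_M\rho}$, with $\rho$ a public or server-known exponent. PRF security of $x\mapsto F_{\mathbb G}(x)^{k_M}$ then transfers to all such correlated evaluations via a reduction that raises its oracle answers to the known $\rho$.

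The residual difficulty is search-pattern linkage. A server that knows the exponent difference between two epochs could test equality across epochs. We rely on the fact that neither server knows both $r_1$ and $r_2$, together with the fresh permutations in each shuffle, so that such equalities cannot be tested. If this fails, we would add to $\mathcal L$ the unavoidable linkage within a single epoch.
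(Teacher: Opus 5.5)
\textbf{The main gap is your simulation of $CS_1$'s view.} It forces a larger leakage than the sizes-only $\mathcal L$ you commit to.

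In Algorithm~\ref{A4}, $CS_1$ receives $TUR.\mathsf{PDec}(\cdot,sk_2^{ID})$ and applies $TUR.\mathsf{Dec}(\cdot,sk_1^{ID})$. So it \emph{opens} every matched bitmap share $B_{h_k},B_{w_k}$ in the clear, and then runs $\mathrm{FindByBitmap}$ against $C_1$, which is never permuted. Your $H_3$ therefore cannot rest on semantic security of Paillier/TUR or on ``one partial key reveals nothing'': these ciphertexts are decrypted by a party that legitimately holds both halves of the decryption.

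The only remaining protection is the sharing itself. Your claim that redistribution makes the shares uniform is unjustified and does not match the scheme. Redistribution ``re-splits the corresponding object identifiers into two fresh bitmap shares''. Algorithm~\ref{A5} writes a new object's bit into exactly one share and leaves it $0$ in the other. So shares are sub-bitmaps of the true posting lists. This is also what makes the per-server $\cup/\cap$ in Algorithm~\ref{A4} return objects of $\Delta(Q)$ at all; with uniformly random shares those set operations would act on data-independent sets.

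A concrete distinguisher follows. In the adaptive experiment $\mathcal A$ chooses $DB$, so it can take a keyword contained only in $O_1$ and test whether the opened bitmap lies in $\{0,e_1\}$, where $e_1$ is the unit bitmap. This holds always in Real, but only with probability about $2^{1-n}$ for your random shares.

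For the same reason, ``nothing about repetitions'' fails for $CS_1$. With $C_1$ static, repeated queries return subsets of the same $\Delta(Q)$ at fixed positions, so they are linkable. Your remark that $CS_1$ sees ``only the size of the final intersection'' is also inconsistent with your own leakage list. It understates the view, which contains every individually opened bitmap.

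The repair is to put into $L^{\mathsf{Query}}$, per server, the shares it opens (a partial access pattern over object positions), and have $\mathcal S$ program the decryptions from it. This is what the paper's $G_2$ does: it keeps $\mathcal L$ abstract, simulates ID fields ``consistent with the leaked access information'', and lets $G_1$ preserve the equality pattern implied by the leakage. You then have to give up the pattern-hiding strengthening.

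\textbf{Two further steps do not go through as stated.}

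In $H_2$, a reduction that raises answers of $x\mapsto F_{\mathbb G}(x)^{k_M}$ to the known exponents $(r_1r_2)^U$ makes each epoch's values pseudorandom in $x$. But it reproduces exactly the relation $T^{(U+1)}=(T^{(U)})^{r_1r_2}$, so it yields no cross-epoch unlinkability. That needs a separate DDH step embedding the exponent the attacked server lacks: $r_1$ for $CS_2$, $r_2$ for $CS_1$.

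In $H_1$, the chain $F(F(\tau,r_2),r_1)$ is evaluated by the servers themselves on tags they store and exchange. PRF security can at best be invoked, treating $r_i$ as the key, for the step under the $r_i$ the server does not hold. But that same $r_i$ is the exponent in $TPF.\mathsf{ReEnc}$, so a PRF reduction cannot simulate the labels and a DDH reduction cannot simulate the tags. You would need to model $F$ as a random oracle as well (or use independent keys) and handle labels and tags jointly.

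Your closing hedge concedes that this part is open. As written, the proposal neither establishes the sizes-only statement nor identifies the leakage under which the theorem holds.
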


\begin{proof}
The proof is by a sequence of hybrids that gradually replace the real keyword encodings,
prefix encodings, tags, and re-randomized ID fields with simulated values that are consistent
with the leakage function collection $\mathcal L$. Because the two cloud servers are assumed to
be non-colluding, the simulator only needs to reproduce the view of each server separately.
Adjacent hybrids are computationally indistinguishable under the pseudorandomness of $TPF$
and $F$, and the resulting simulated transcript depends only on
$L^{\mathsf{Query}}$ and $L^{\mathsf{Update}}$. Therefore the real and ideal experiments are
indistinguishable up to negligible advantage. The detailed hybrid argument is given in
Appendix~\ref{proof_th1}.
\end{proof}

\subsection{Shuffle Indistinguishability}

Shuffle indistinguishability requires that, after an index-shuffle operation, a cloud server
cannot determine which pre-shuffle entry corresponds to which post-shuffle entry with
non-negligible advantage.

\begin{theorem}
BRASP achieves shuffle indistinguishability if $TPF$ and $F$ are secure pseudorandom
functions and the ciphertexts output by $TUR.\mathsf{ReEnc}$ are computationally
unlinkable to their pre-re-randomization form.
\end{theorem}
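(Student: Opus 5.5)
We formalize shuffle indistinguishability as a linking game against one honest-but-curious server, by symmetry $CS_2$ acting on the shares of $CS_1$ (the case of $CS_1$ is analogous). The challenger runs $\mathsf{IndexShuffle}$ on an index share of $N$ entries $(s_j,ID_j,\tau_j)$. The adversary sees the pre-shuffle entries and the post-shuffle entries, together with everything else in that server's view. It then outputs a pair $(j,j')$, or equivalently guesses a bit distinguishing the true correspondence from a uniformly random one. Advantage means success minus the baseline $1/N$. The proof is a hybrid argument whose final game has every post-shuffle component independent of the pre-shuffle components. In that game the linking advantage is exactly zero.

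\textbf{Hybrid 0} is the real shuffle. Consider the view of the server that did not apply the last re-encryption step, for example $CS_2$ observing the output of $CS_1$. It sees $\widehat{s}=TPF.ReEnc(s',r_1)=F_{\mathbb{G}}(m)^{k r_2 r_1}$. \textbf{Hybrid 1} replaces every $\widehat{s}$ with an independent uniform group element. This change is justified by the pseudorandomness of $TPF$ under the fresh exponent $r_1$, which is unknown to $CS_2$. The indistinguishability reduction embeds the challenge PRF oracle as $x\mapsto F_{\mathbb{G}}(x)^{r_1}$ and raises its outputs to the known power $kr_2$. This is essentially DDH-style pseudorandomness of $x\mapsto F_{\mathbb{G}}(x)^{r}$. \textbf{Hybrid 2} replaces each $\widehat{ID}$ with a fresh encryption of an independent plaintext, applied entry by entry. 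This is a direct invocation of the assumed computational unlinkability of $TUR.\mathsf{ReEnc}$, and it also uses the semantic security of Paillier, since the server holds only a partial key. \textbf{Hybrid 3} handles the tags. Any tag-derived value the server sees, namely the positions $P_k(\tau)$ used in updates, is chained through $F(\cdot,r_1)$. Because $r_1$ is unknown to the adversary, these values can be replaced by random ones using PRF security of $F$. In Hybrid 3 the output list is a uniformly random permutation of mutually independent random values. Hence the adversary's linking probability is exactly $1/N$. Summing the hybrid losses gives $\mathrm{Adv}\le 2N\cdot\mathrm{Adv}_{TPF/F}+N\cdot\mathrm{Adv}_{TUR}$, which is negligible.

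The main obstacle is the server that applied the \emph{first} layer, or more generally either server that knows its own factor $r_i$. Since $TPF.ReEnc$ is deterministic and multiplicative, a single server could in principle link entries if the other layer were missing. The argument must therefore rely on the composition: the final outputs are re-randomized by the \emph{other} server's secret $r_j$ and permuted by the other server. The plan is to always take the PRF and unlinkability challenge with respect to the partner's secret and permutation. This is possible because the servers are non-colluding, and it requires that $r_1$ and $r_2$ are each kept from the opposite server. Setup (Algorithm~\ref{A0}) does provide this separation. Two further points need care. The tag counter $\tau$ is incremented identically for all entries, so it carries no per-entry information. Labels of entries with identical plaintext cannot occur, since keywords and prefixes are distinct within an index. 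I would state these two facts explicitly as lemmas before the hybrids.
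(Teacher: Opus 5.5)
Your decomposition is the same as the paper's: labels are hidden by $TPF$, ID fields by $TUR.\mathsf{ReEnc}$, tags by $F$, and positions by the permutation. Your observation that each server knows its own $r_i$, so the hiding must come from the partner's layer, usefully sharpens the paper's appeal to ``any party that does not know the shuffle randomness''.

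The tag component, however, is handled incorrectly. Your lemma that $\tau$ ``is incremented identically for all entries, so it carries no per-entry information'' is false. A uniform shift is a public bijection, so it hides nothing unless all tags coincide, and in BRASP they do not. Algorithm~\ref{A6} stores each entry in slot $P_k(\tau_w^1)$, which requires distinct per-entry tags. Algorithm~\ref{A5} also inserts a new keyword with the pseudorandom tag $F(k_T,(1,w_k))$. If each server really computed $\tau+1$, the adversary would link $(s,ID,\tau)$ to the post-shuffle entry carrying $\tau+2$ with probability $1$, and the theorem would be false. It can only hold under the keyed refresh $\tau'=F(F(\tau,r_2),r_1)$, which is what the paper's proof uses and what the tag iteration in Algorithm~\ref{A5} presupposes. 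Your PRF step must then cover the tags themselves, which are carried in every shuffle message. Covering only the update positions $P_k(\tau)$, as your Hybrid~3 does, is not enough.

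Hybrid~1 also fails as written, for three reasons.
\begin{enumerate}
\item The exponent $r_1$ is not fresh. Algorithm~\ref{A0} samples it once, and it is reused in every round and in every token exponent $k_u(r_1r_2)^{U}$.
\item Both shares carry identical labels $F_{\mathbb{G}}(m)^{k_M(r_1r_2)^{U}}$. The index is built once, only the bitmaps are split, and each shuffle multiplies the exponent by $r_1r_2$ in both shares. $CS_2$ receives $\widetilde{I}^1$ at the start of the next shuffle and can compare it with its own $\widetilde{I}^2$. Replacing $\widehat{s}$ by independent uniform elements is therefore detectable.
\item Your reduction, which raises the oracle output $F_{\mathbb{G}}(m)^{r_1}$ to the known power $kr_2$, works only in the first round. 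For $t\ge 1$ the exponent $k=k_M(r_1r_2)^{t}$ contains $r_1^{t}$, which the reduction does not know.
\end{enumerate}
The repair is to replace the label map round by round with a single random function used consistently in both shares, in the intermediate messages, and for token matching. Each step is then justified by DDH with random self-reducibility rather than by single-use PRF security: from $g^{r_1}$ the reduction can compute $R^{r_1}$ for any $R=g^{\rho}$ it generated itself. In that hybrid, $CS_2$'s ability to link $CS_1$'s share reduces to linking across the layer ($r_1$ plus permutation) that $CS_1$ applied to $CS_2$'s own share. That is where your partner's-secret argument actually does its work. The paper's sketch glosses over the same points, but your more explicit reduction does not go through without them.
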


\begin{proof}
Each shuffle applies three independent hiding steps: the keyword/prefix component is
re-encrypted by $TPF.\mathsf{ReEnc}$, the bitmap component is re-randomized by
$TUR.\mathsf{ReEnc}$, and the tag is refreshed by $F$; the resulting entries are then randomly
permuted. Consequently, even if a cloud server knows the shuffled multiset of entries, it does
not know which fresh representation corresponds to any particular pre-shuffle entry. Any
adversary that links a shuffled entry to its predecessor with non-negligible advantage can be
used either to distinguish the outputs of $TPF$ or $F$ from random, or to violate the
unlinkability of the re-randomized $TUR$ ciphertexts. The detailed reduction appears in
Appendix~\ref{proof_th2}.
\end{proof}

\subsection{Query Unforgeability}

Query unforgeability requires that no PPT adversary can produce a valid search token for an
unseen Boolean range query without knowing the authorized client's secret key.

\begin{theorem}
BRASP achieves query unforgeability if the proxy pseudorandom function $TPF$ is
collision-resistant.
\end{theorem}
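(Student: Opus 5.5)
We argue by reduction, in the style of the proof sketches given for the previous two theorems. The first step is to fix the game. An adversary $\mathcal A$ plays one cloud server, say $CS_2$, which holds $rk_{u\rightarrow M}$, $r_2$, $sk_2^{ID}$ and its index share. $\mathcal A$ may adaptively request tokens $(T_w,T_h)$ for queries of its choice, and it sees every trapdoor and transcript for those queries. At the end it outputs a token $T^\ast$ together with a query $Q^\ast=(R^\ast,W^\ast)$ that was never queried. $\mathcal A$ wins if some component of $T^\ast$, after $TPF.ReEnc(\cdot,rk_{u\rightarrow M})$, equals a current index label $TPF.Rnd(k_M(r_1r_2)^{U},x)$ for a keyword or prefix $x$ in $W^\ast\cup\mathcal{G}(R^\ast)$, where that component was not among the tokens already issued for $x$ at its current state.

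The second step is to reduce validity to a collision of $TPF$. Since $TPF.Rnd(k,m)=F_{\mathbb G}(m)^k$ and $TPF.ReEnc$ maps key $k_u$ to key $k_M$, a forged component $T^\ast_x$ is valid exactly when
\[
TPF.ReEnc(T^\ast_x,rk_{u\rightarrow M})=TPF.Rnd(k_M(r_1r_2)^{U_x},x).
\]
Re-encryption by the known scalar $rk_{u\rightarrow M}$ is a bijection on $\mathbb G$, so this is equivalent to $T^\ast_x=TPF.Rnd(k_u(r_1r_2)^{U_x},x)$. The forger has therefore produced an output of $TPF$ under the client-derived key on the fresh input $x$. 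There are two cases.
\begin{itemize}
\item \emph{Case (a): $T^\ast_x$ coincides with a previously seen token $TPF.Rnd(t,x')$ for some $x'\neq x$, or for a different state.} Then $\mathcal A$ has exhibited a collision $TPF.Rnd(t,x')=TPF.Rnd(t',x)$. The reduction outputs this pair and breaks the collision resistance assumed in the statement.
\item \emph{Case (b): $T^\ast_x$ is genuinely new.} Then $\mathcal A$ has predicted $F_{\mathbb G}(x)^{k_u(r_1r_2)^{U_x}}$ without the key, because $k_u$ is never given to either server and $\mathcal A$ knows at most one of $r_1,r_2$. We bound this case by the pseudorandomness of $TPF$ already used for the Confidentiality theorem: an unqueried output is unpredictable except with probability $1/q+\mathsf{negl}$.
\end{itemize}

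The third step is the simulation in the reduction. The reduction $\mathcal B$ receives oracle access to $TPF.Rnd(k_u,\cdot)$, or a collision challenge. It picks $r_1,r_2$ and the index keys itself, and samples the shuffle states. It answers token queries through its oracle, raising each answer to the power $(r_1r_2)^{U}$, which is legitimate because $(F_{\mathbb G}(m)^{k})^{e}=F_{\mathbb G}(m)^{ke}$. It builds index labels consistently by choosing $rk_{u\rightarrow M}$ at random and implicitly setting $k_M=k_u\cdot rk_{u\rightarrow M}$. This gives the adversary a view distributed identically to the real one. The success probability of $\mathcal A$ then translates, up to a loss of factor $|W^\ast\cup\mathcal G(R^\ast)|$ for guessing the forged component, into the advantage of $\mathcal B$.

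The main obstacle is the gap between the stated assumption and the argument. Collision resistance alone does not rule out Case (b): a truly fresh forgery is an unpredictability failure, not a collision. A second difficulty is that the server knows $rk_{u\rightarrow M}$ and one shuffle parameter. I would handle this by merging Case (b) into the pseudorandomness hypothesis already assumed earlier in the paper, or equivalently by interpreting ``collision-resistant'' for $TPF$ as the infeasibility of producing any pair $(m,s)$ with $s=TPF.Rnd(k,m)$ for an unqueried $m$. I would also check that using the exponent $r_1r_2$ across states does not let $\mathcal A$ derive a new-state token from an old one. Deriving it requires the unknown $r_{3-i}$, so the non-collusion assumption is exactly what closes this case.
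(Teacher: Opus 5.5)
Your case split is the paper's own: a collision with an honestly generated image versus a fresh accepted image. You are also right that collision resistance does not touch the second case; the paper simply declares it ``ruled out by the collision-resistant keyed encoding'' and does not close it.

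\textbf{Case (b) fails in your own game.} You take $\mathcal A$ to be $CS_2$. It therefore holds $rk_{u\rightarrow M}=k_M/k_u$ together with its index share, that is, the current labels $L_x=F_{\mathbb G}(x)^{k_M(r_1r_2)^{U_x}}$ against which tokens are matched. As you note, $TPF.ReEnc(\cdot,rk_{u\rightarrow M})$ is exponentiation by a scalar that $\mathcal A$ knows, so $\mathcal A$ can also invert it. It computes $L_x^{1/rk_{u\rightarrow M}}=F_{\mathbb G}(x)^{k_u(r_1r_2)^{U_x}}$, with the exponent inverted mod $q$. This is exactly a valid, never-issued token for $x$ at its current state, obtained without $k_u$ or $r_1$. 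Outputting such inverted labels with a fresh $Q^\ast$ whose $W^\ast$ is the whole keyword set wins your game with probability essentially $1$. Guessing a single label still wins with probability at least $1/m$. So Case (b) is not an unpredictability event, and no pseudorandomness assumption on $TPF$ can bound it.

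\textbf{The reduction in Step~3 fails for the same reason.} To write down the index labels, $\mathcal B$ must query its oracle $TPF.Rnd(k_u,\cdot)$ on every indexed $x$. A valid forgery can only concern an indexed $x$, so $\mathcal B$ never holds an output on an unqueried input and extracts nothing. Switching to $CS_1$ does not help: it also holds $rk_{u\rightarrow M}$ (it uses it in Algorithm~\ref{A6}) and its own labels, which coincide with $CS_2$'s at each state.

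\textbf{What is missing.} You need a restriction on what the forger sees. Unforgeability can only be argued for an adversary that has neither $rk_{u\rightarrow M}$ nor the current index labels, for example an outsider who only observes issued tokens. Even then, Case (b) must be reduced to the unpredictability of $F_{\mathbb G}(x)^{k}$ under the multiplicatively related keys $k_u(r_1r_2)^{U}$. That is a CDH-type assumption with $F_{\mathbb G}$ modelled as a random oracle, not collision resistance. Against either cloud server, as in your game, the property does not hold for this construction.
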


\begin{proof}
A valid search token must remain consistent with both the client's secret key and the current
shuffle state after the cloud applies the authorization re-encryption step. Hence, for an unseen
query, any successful forgery must either induce the same valid $TPF$ image as an honestly
generated token for a different input, or produce a fresh valid image that is consistent with an
unknown secret-key-dependent input. The former event is exactly a collision in the $TPF$
image space, and the latter would contradict the assumed hardness embodied in the keyed $TPF$
construction. Therefore the success probability of a polynomial-time forger is negligible. The
full argument is given in Appendix~\ref{proof_th3}.
\end{proof}

\subsection{Forward Security}

Forward security requires that, after inserting a new object, the cloud servers cannot use
information leaked by searches issued before the update to determine whether the new object
would have matched any earlier query.

\begin{theorem}
BRASP achieves forward security if $TPF$ is collision-resistant and the ciphertexts produced by
$TUR$ remain unlinkable under re-randomization.
\end{theorem}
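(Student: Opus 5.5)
We will argue forward security with a simulation-based hybrid argument in the style of the confidentiality proof. The update leakage $L^{\mathsf{Update}}$ is taken to contain only the number of update-token entries for each index, namely $|P(O_{n+1})|$ (which is fixed by the Hilbert-curve order) and $|W_{n+1}|$. The goal is to show that the view of each non-colluding server during the update, and during all later shuffles, can be simulated from this leakage alone. We also show that the simulation does not depend on the trapdoors issued before the update. Since the servers do not collude, the views of $CS_1$ and $CS_2$ are treated separately.

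\textbf{Step 1 (view of $CS_1$).} In Algorithm~\ref{A6}, $CS_1$ sees:
\begin{itemize}
\item its own permuted list $L_w$;
\item the list returned by $CS_2$, in which every slot has been either homomorphically combined via $\oplus$ or re-randomized by $TUR.\mathsf{ReEnc}$;
\item the residual tokens for new keywords.
\end{itemize}
First, we replace every slot of the returned $L_w$ with a fresh $TUR$ encryption. This is justified by the unlinkability of re-randomized $TUR$ ciphertexts, together with the fact that $CS_1$ alone holds only $sk_1^{ID}$ and cannot decrypt. After this replacement, $CS_1$ cannot tell which positions $p$ received a $\Delta ID$. The residual tokens carry the labels $TPF.\mathsf{Rnd}(k_u,w_k)$. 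We argue that these are re-encrypted under $rk_{u\rightarrow M}$ to values that match no label in the current shuffled index, because every existing label carries the exponent $(r_1r_2)^{U}$. They also differ from every earlier trapdoor, since those trapdoors were formed with $t_k=k_u(r_1r_2)^{U_{w_k}}$ for the states in force at query time. Collision resistance of $TPF$ ensures that such a coincidence happens only with negligible probability.

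\textbf{Step 2 (view of $CS_2$).} $CS_2$ sees the positions $p_1=P_k(\tau^1)$ together with the freshly encrypted shares $ID^1,ID^2$. The bitmaps are encrypted under $TUR$, and the choice of share $b$ is random, so the ciphertexts can be replaced by encryptions of zero. The positions are images of tags obtained by iterating $F$ under $r_1,r_2$ from $F(k_T,\cdot)$. Because $CS_2$ never observes the original index order, we replace these positions with uniformly random distinct slots, appealing to the pseudorandomness of $F$ and of $P_k$, as in the confidentiality theorem. In this hybrid, the update transcript of $CS_2$ is a random set of positions of the leaked size plus encryptions of zero, and is therefore independent of the past trapdoors.

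\textbf{Step 3 (linking to earlier searches).} A past query can be tested against the new object only through a stored trapdoor $T_{w_k}$ or $T_{h_k}$ together with the ID fields it locates. We show that, after the next shuffle, every label has been re-keyed by a further factor $r_1r_2$. An old trapdoor therefore re-encrypts to a value that no longer appears in the index, except with negligible probability by collision resistance of $TPF$. If the adversary could decide whether $O_{n+1}$ matches an earlier query, we would obtain either a $TPF$ collision linking the stale token to the refreshed label, or a distinguisher against $TUR$ re-randomization unlinkability that links the modified slot to the pre-update entry. We expect this step to be the main obstacle. The difficulty is that the exponent structure is deterministic given $r_1,r_2$, and $CS_i$ knows its own $r_i$. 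The step therefore has to use the fact that neither server knows both $r_1$ and $r_2$, and the reduction must embed the challenge in the factor contributed by the other server. We will try this first, taking the pseudorandomness of $F_{\mathbb{G}}(m)^{k}$ under an unknown key (treating the product $r_1r_2$ as unknown to each single server) as the working assumption, and reduce to it. Combining Steps 1–3 across the hybrids then gives a negligible advantage, which proves the theorem.
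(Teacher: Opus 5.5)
There is a genuine gap, and it is the one you flag yourself. Step~3 carries the whole content of forward security, and you leave it as a plan that rests on pseudorandomness of $F_{\mathbb{G}}(m)^{k}$ under an unknown key. Step~2 separately invokes pseudorandomness of $F$ and $P_k$. None of these is a hypothesis of the theorem, which assumes only collision resistance of $TPF$ and unlinkability of re-randomized $TUR$ ciphertexts. So even a completed version of your argument proves a different statement. Your diagnosis is nevertheless right. Deciding whether $F_{\mathbb{G}}(w)^{a}$ and $F_{\mathbb{G}}(w)^{a r_1 r_2}$ come from the same keyword is a decisional problem in which no collision occurs: in a group with a symmetric bilinear map, $TPF$ can remain collision-resistant while DDH is easy. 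A single-instance PRF assumption is also not quite the right form. In the symmetric shuffle of its own index, $CS_2$ sees its labels $\{s\}$ together with the permuted set $\{s^{r_1}\}$, i.e.\ exponentiations of arbitrary group elements rather than of hashed inputs. The paper's proof does not resolve this; it takes a coarser route. It splits any linking attack into ``correlating update tokens across shuffle states'' and ``linking a re-randomized $TUR$ ciphertext,'' and charges the first event to collision resistance of $TPF$ without giving a reduction. So you must choose between two options. You can adopt that case split and inherit its unsupported step. Or you can keep your hybrid route and change the hypotheses, e.g.\ to DDH in $\mathbb{G}$ with $F_{\mathbb{G}}$ modeled as a random oracle, embedding the challenge in $r_1$ for $CS_2$'s view and in $r_2$ for $CS_1$'s.

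Steps~1 and~2 also contain concrete errors. In Step~1 you claim, by collision resistance, that the residual labels $ws_k=TPF.\mathsf{Rnd}(k_u,w_k)$ differ from every earlier trapdoor. But the \emph{else} branch of Algorithm~\ref{A5} is exactly the case $U_{w_k}=0$, meaning $w_k$ is not yet indexed. Any trapdoor for $w_k$ that Algorithm~\ref{A3} produced while $w_k$ was absent is $TPF.\mathsf{Rnd}(k_u(r_1r_2)^{0},w_k)=ws_k$. This is an equality of identical inputs, not a collision, and $CS_2$ holds both the trapdoors and $UT_O^{w_1}$, so it detects the equality directly. You must either exclude queries on absent keywords or add this equality to $L^{\mathsf{Update}}$. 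In Step~2 your leakage, only $|P(O_{n+1})|$ and $|W_{n+1}|$, is too small. $CS_2$ performs the matching against $L_w$ and $CS_1$ receives the residual list, so both learn how many keywords of $O_{n+1}$ are already indexed. Accordingly, the simulated update positions cannot be uniformly random distinct slots, because the real ones land on positions of $L_w$ exactly that many times. The simulator needs this count from the leakage and must plant the matches accordingly.
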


\begin{proof}
Update tokens are generated from the current shuffle state and are therefore unlinkable to the
tokens observed before the update unless the adversary can correlate two different $TPF$ states.
Moreover, the bitmap shares inserted by the update procedure are encrypted under $TUR$ and
are further re-randomized by subsequent shuffles, so their ciphertext representations cannot be
linked to prior search views. Thus, linking a newly inserted entry to a pre-update query would
require either breaking the state-dependent protection of $TPF$ or defeating the unlinkability of
re-randomized $TUR$ ciphertexts, both of which occur only with negligible probability. The
complete proof is given in Appendix~\ref{proof_th4}.
\end{proof}

\section{Experiment}
\label{experiment}

\begin{figure*}[t]
    \centering
    \begin{subfigure}[t]{0.48\linewidth}
        \centering
        \includegraphics[width=\linewidth]{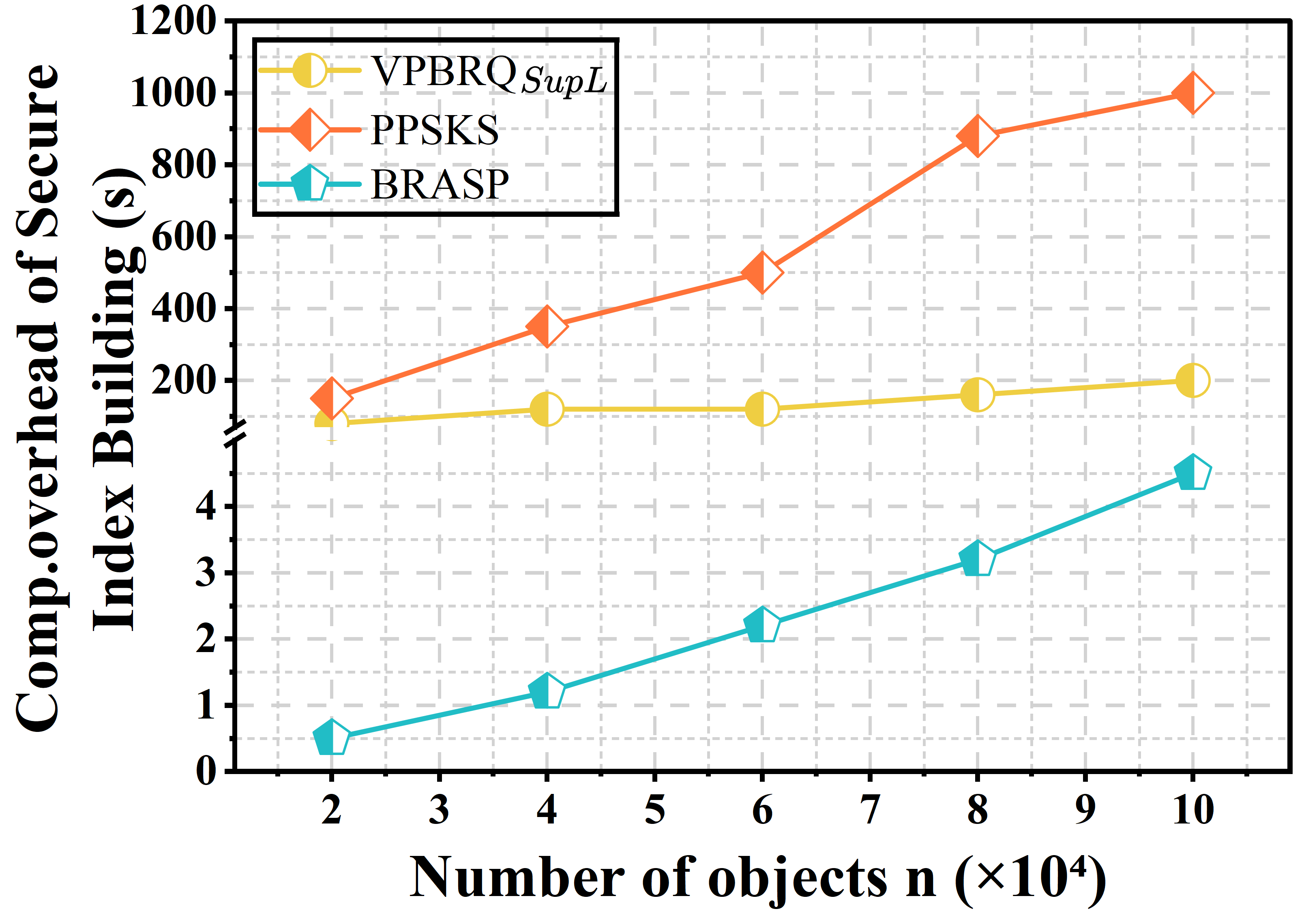}
        \caption{}
        \label{sub1}
    \end{subfigure}
    \hfill
    \begin{subfigure}[t]{0.48\linewidth}
        \centering
        \includegraphics[width=\linewidth]{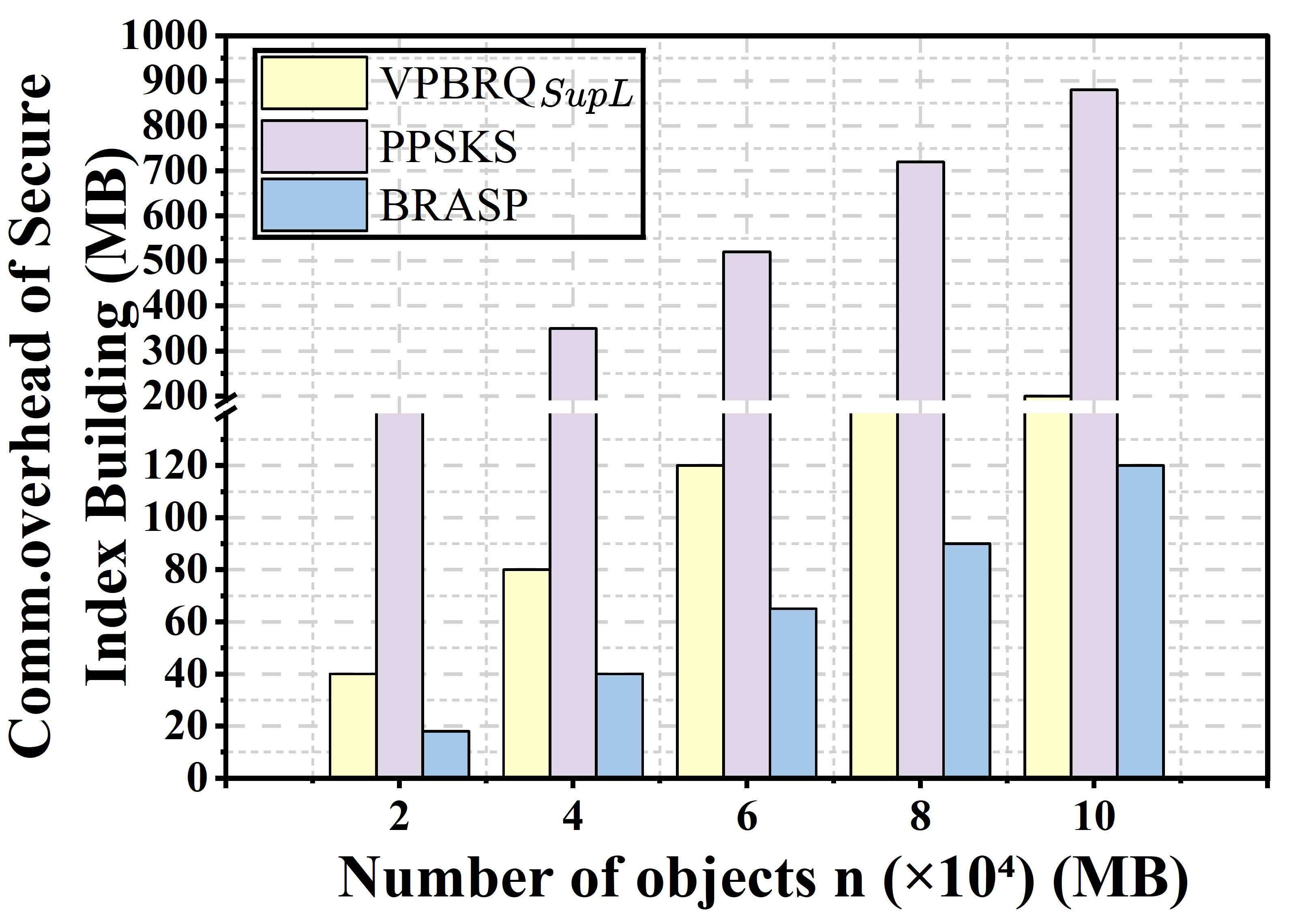}
        \caption{}
        \label{sub2}
    \end{subfigure}

    \begin{subfigure}[t]{0.48\linewidth}
        \centering
        \includegraphics[width=\linewidth]{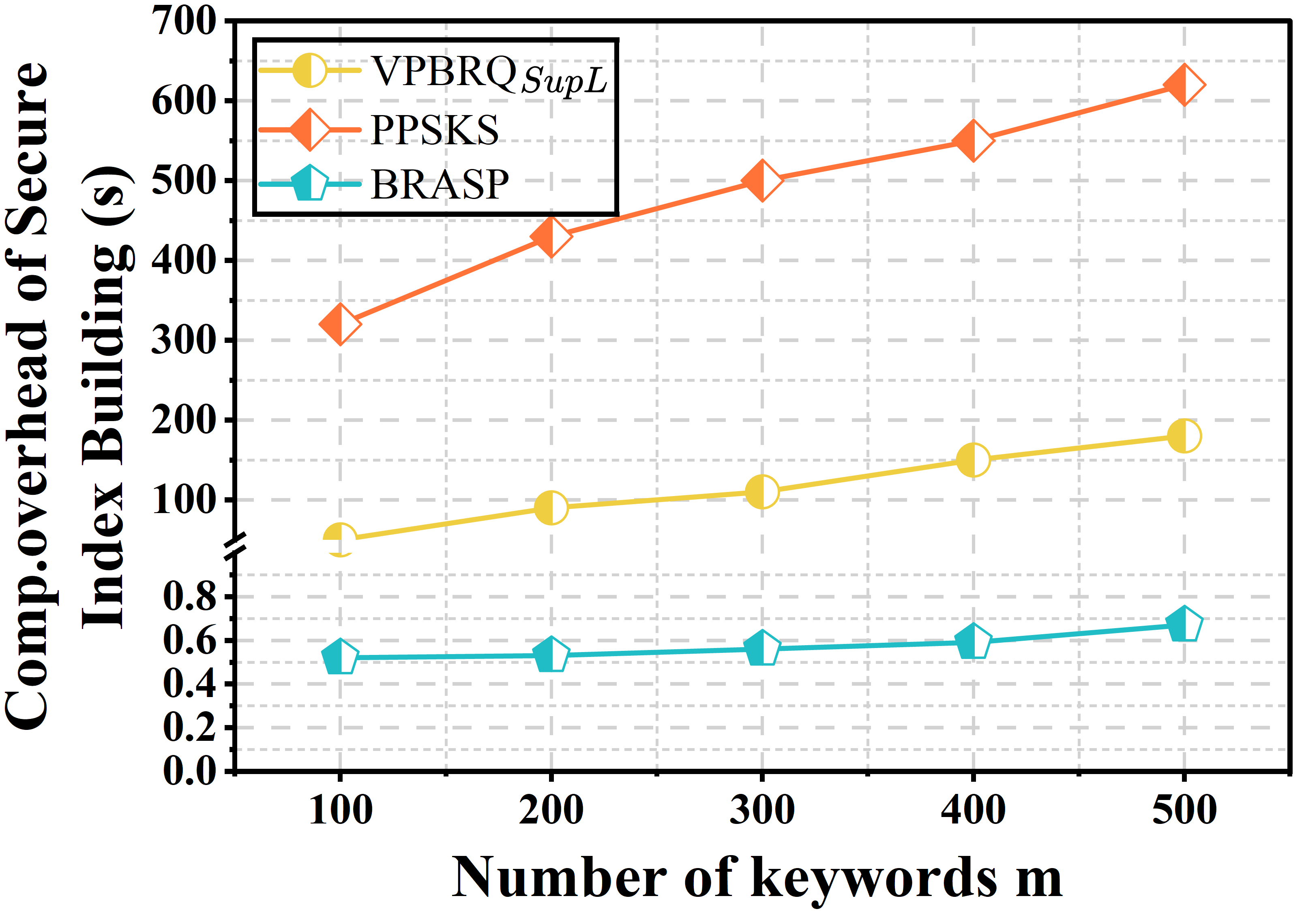}
        \caption{}
        \label{sub3}
    \end{subfigure}
    \hfill
    \begin{subfigure}[t]{0.48\linewidth}
        \centering
        \includegraphics[width=\linewidth]{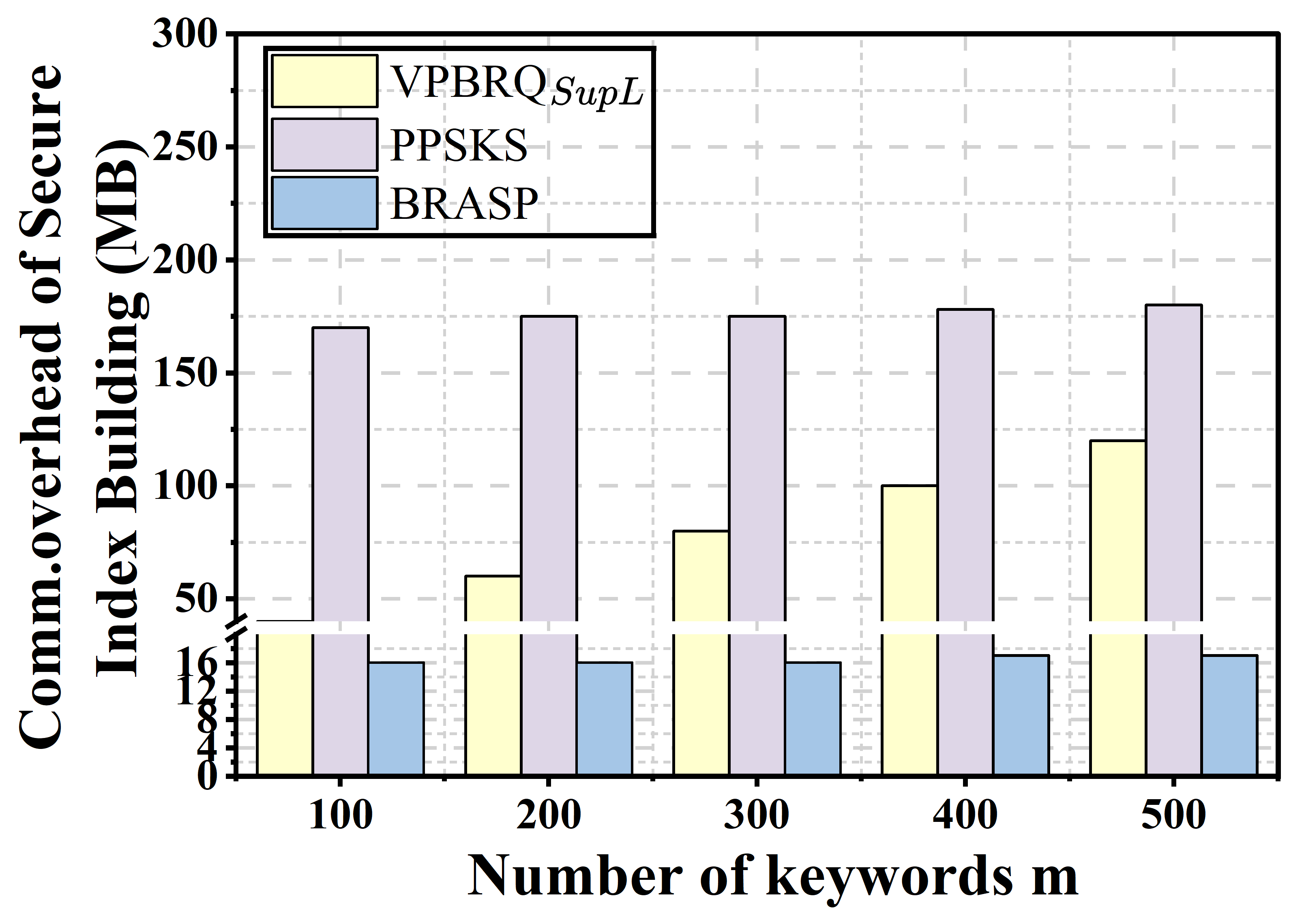}
        \caption{}
        \label{sub4}
    \end{subfigure}

    \caption{Performance of Secure Index Building.}
    \label{fig:index}
\end{figure*}

In our experiments, we use the Yelp business dataset\footnote{\url{https://business.yelp.com/dataset}} as a real-world spatio-textual benchmark.
All experiments are implemented in Python 3.12 and conducted on a 64-bit Windows 11 machine with 16 GB RAM and an AMD Ryzen 5 3500U CPU with Radeon Vega Mobile Graphics at 2.10 GHz.
We focus on the computation and communication overhead of four representative phases: secure index building, token generation, search, and update. Additional implementation details and workload settings are reported in Appendix~\ref{app:exp_settings}

\begin{figure}[!ht]
  \centering
  \centering
    \begin{minipage}[t]{0.45\linewidth}
        \centering
        \includegraphics[width=\linewidth]{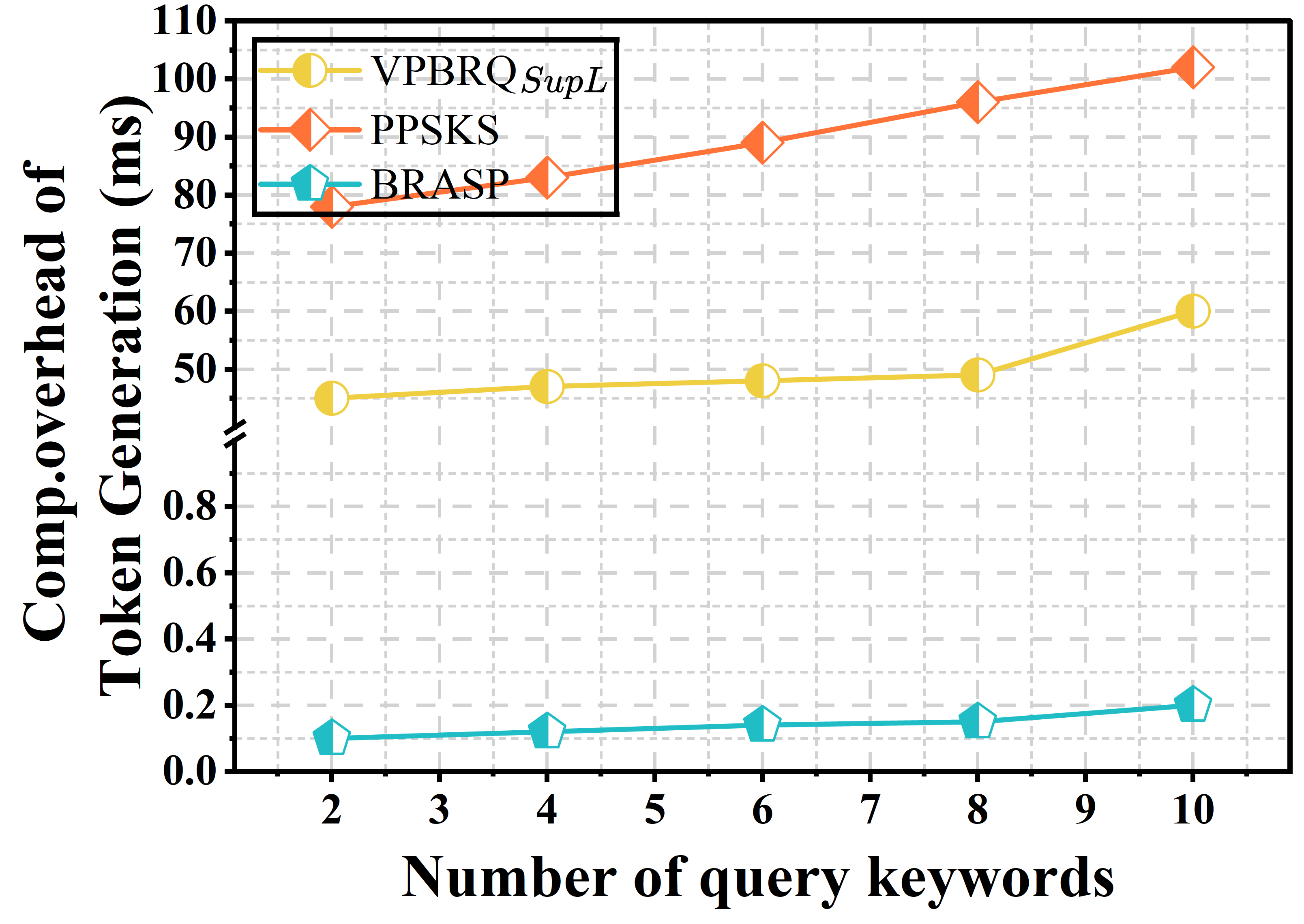}
        \subcaption{}
        \label{trap1}
    \end{minipage}
    \begin{minipage}[t]{0.45\linewidth}
        \centering
        \includegraphics[width=\linewidth]{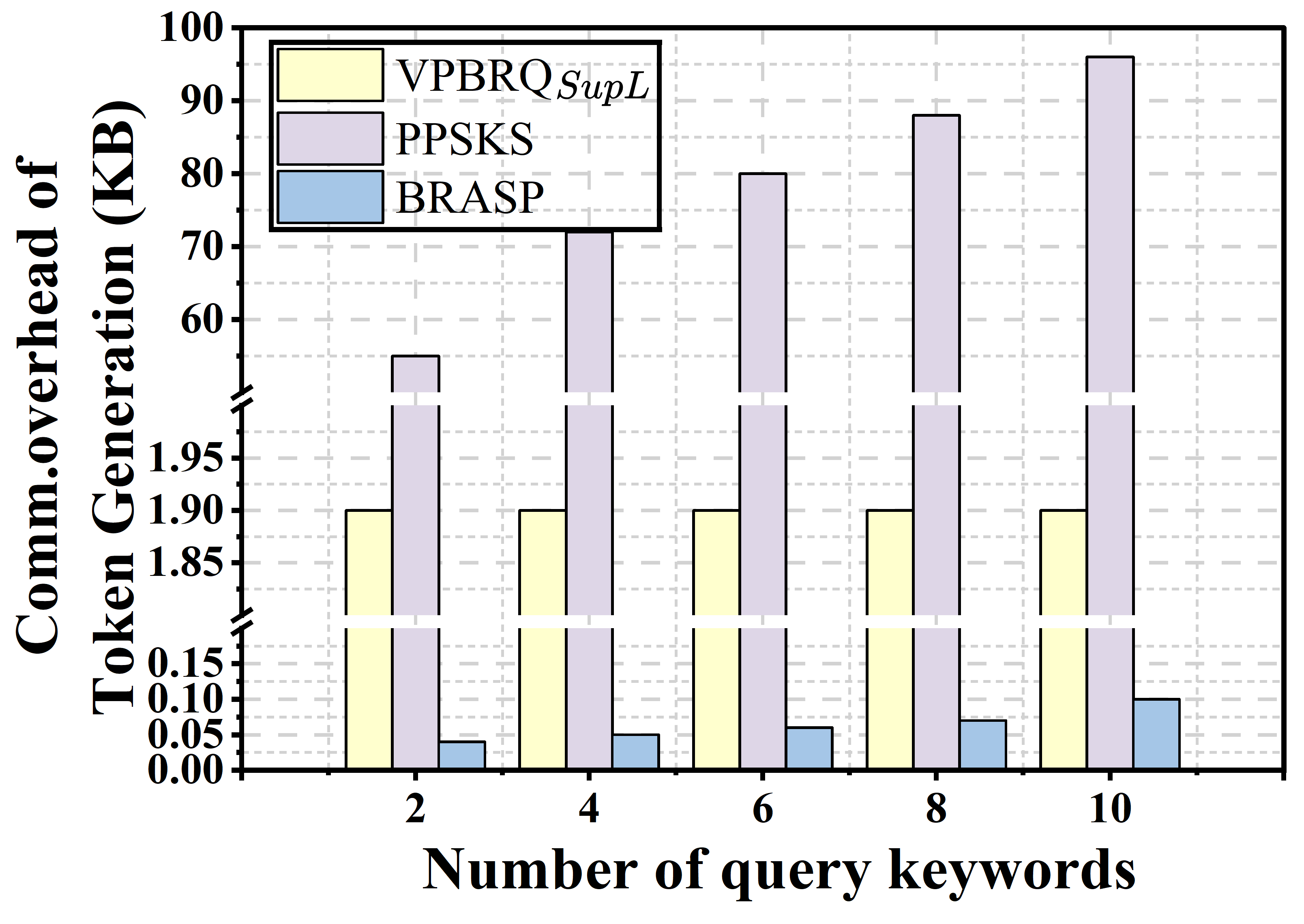}
        \subcaption{}
        \label{trap2}
    \end{minipage}
  \caption{Performance of Token Generation.}
  \label{trap}
\end{figure}

\paragraph{\textbf{Secure Index Building.}}
Fig.~\ref{fig:index} compares the secure-index-building performance of $\mathrm{VPBRQ_{SupL}}$, PPSKS, and BRASP.
Fig.~\ref{sub1} and Fig.~\ref{sub3} report the computation overhead as the number of spatial objects and the number of indexed keywords increase, respectively, whereas Fig.~\ref{sub2} and Fig.~\ref{sub4} report the corresponding communication overhead.
In all four settings, BRASP achieves lower overhead than $\mathrm{VPBRQ_{SupL}}$ and PPSKS.
This advantage is consistent with the design of BRASP: it constructs encrypted prefix--ID and keyword--ID indexes using lightweight TPF/TUR-based processing, while the baseline schemes rely on heavier cryptographic operations during index construction.

\paragraph{\textbf{Token Generation.}}
Fig.~\ref{trap} reports the token-generation performance of the three schemes.
Fig.~\ref{trap1} shows the computation overhead as the number of query keywords increases, and Fig.~\ref{trap2} shows the corresponding communication overhead.
BRASP consistently incurs the lowest token-generation overhead.
This result is expected because BRASP generates keyword and prefix trapdoors through lightweight pseudorandom encodings, whereas the baselines use more expensive distributed or homomorphic primitives.

\begin{figure*}[h]
    \centering
    \begin{minipage}[t]{0.32\linewidth}
        \centering
        \includegraphics[width=\linewidth]{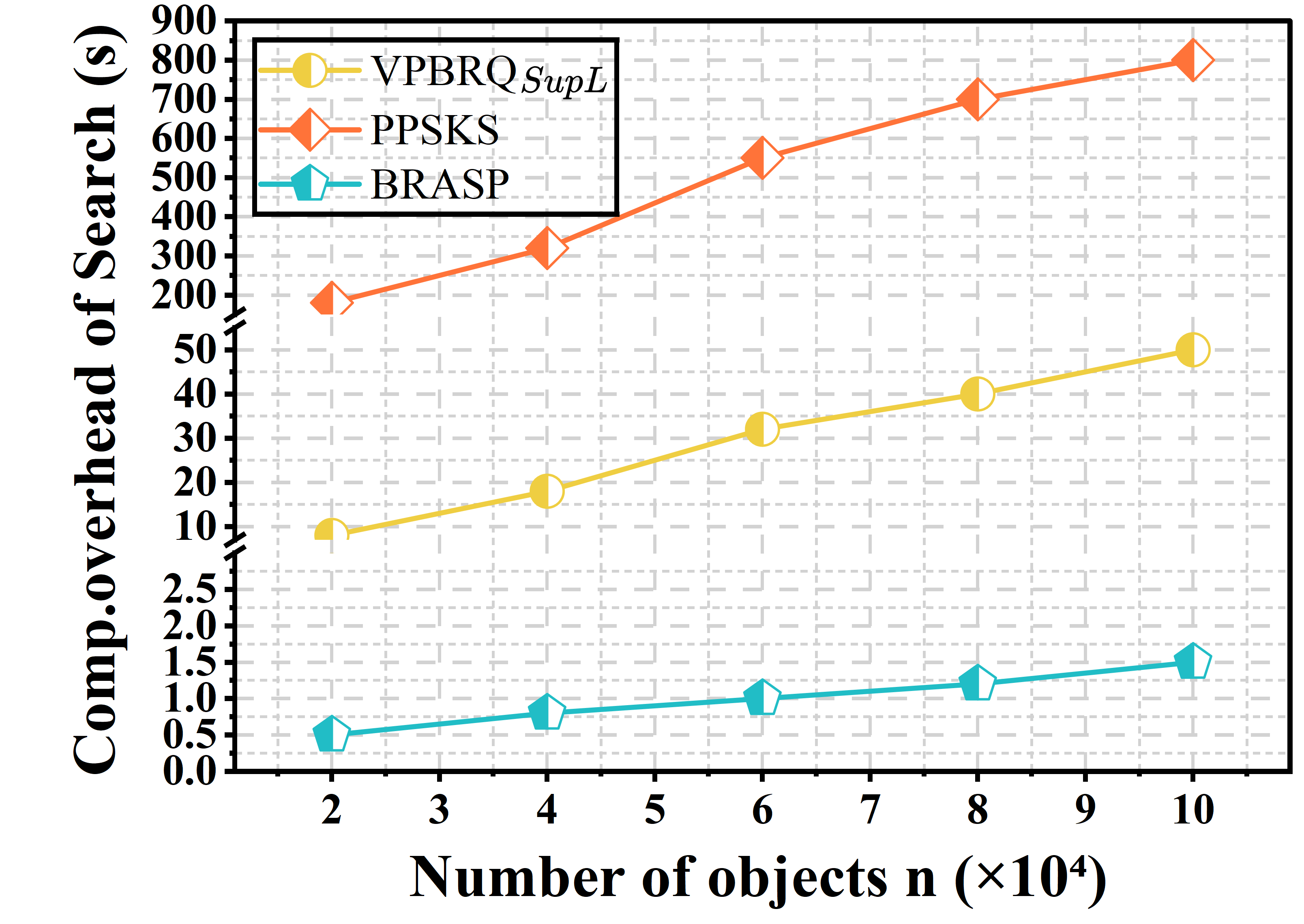}
        \hspace{-1pt}
        \subcaption{}
        \label{su1}
    \end{minipage}%
    \begin{minipage}[t]{0.32\linewidth}
        \centering
        \includegraphics[width=\linewidth]{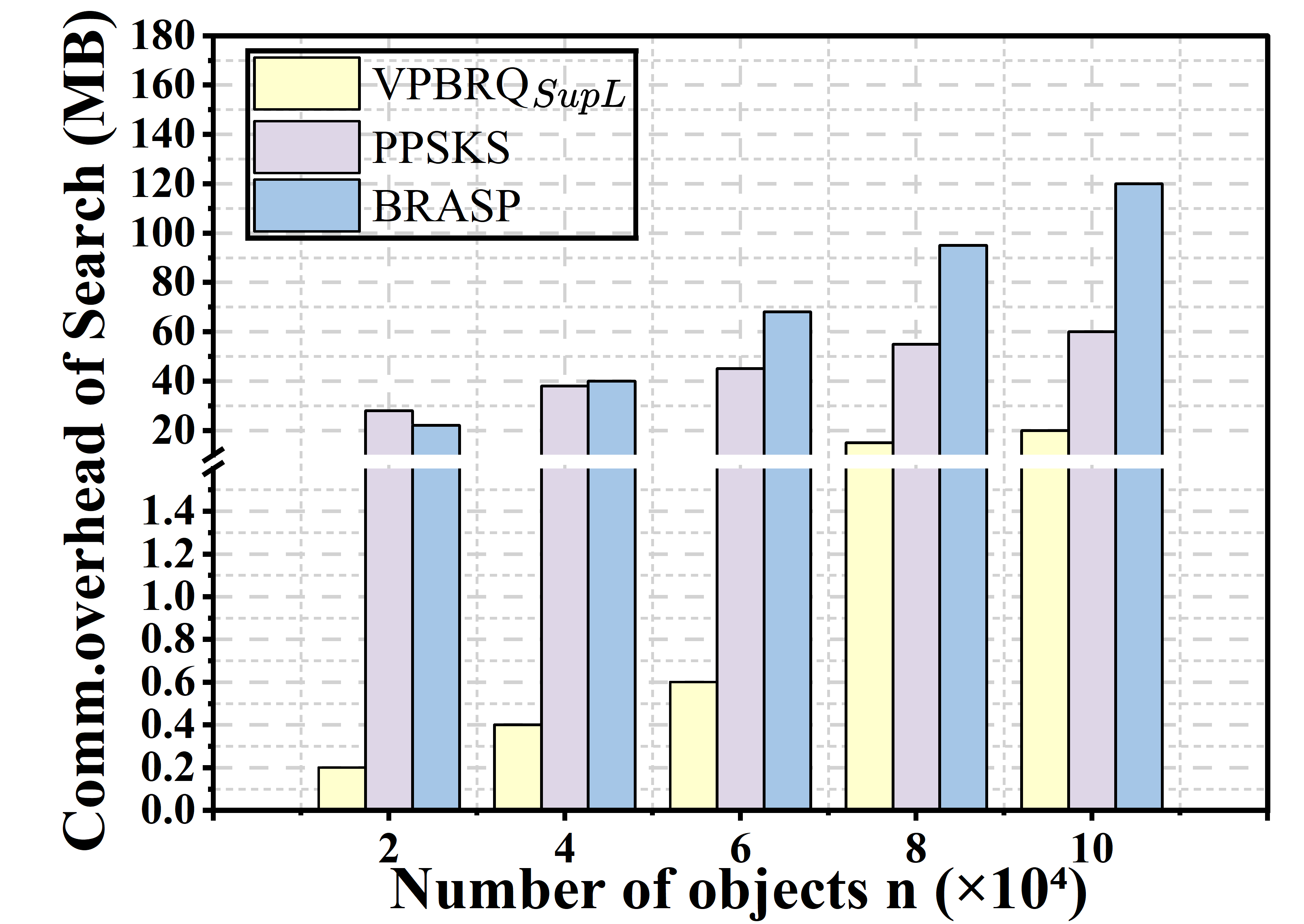}
        \hspace{-1pt}
        \subcaption{}
        \label{su2}
    \end{minipage}%
    \begin{minipage}[t]{0.32\linewidth}
        \centering
        \includegraphics[width=\linewidth]{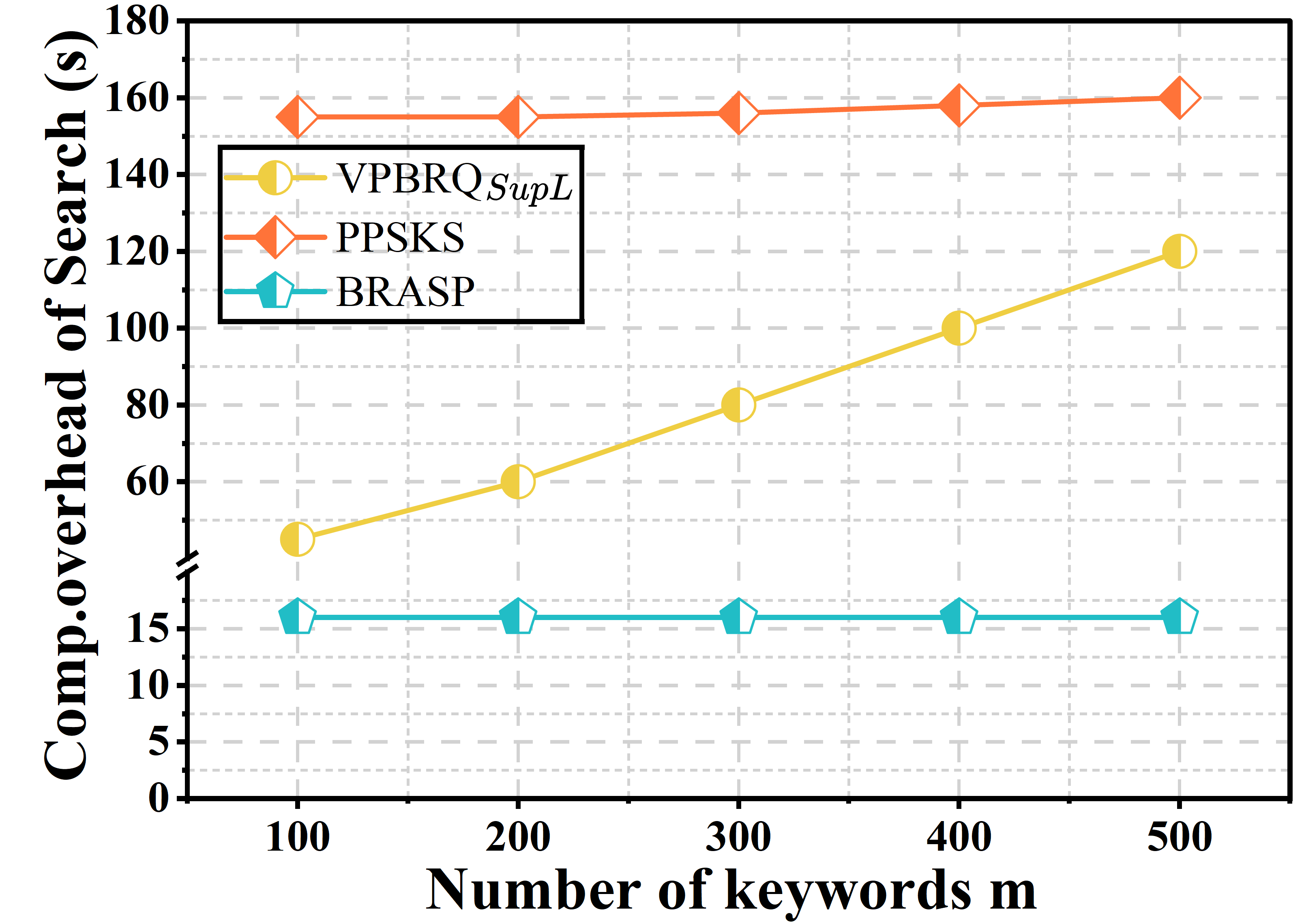}
        \hspace{-1pt}
        \subcaption{}
        \label{su3}
    \end{minipage}

    \begin{minipage}[t]{0.32\linewidth}
        \centering
        \includegraphics[width=\linewidth]{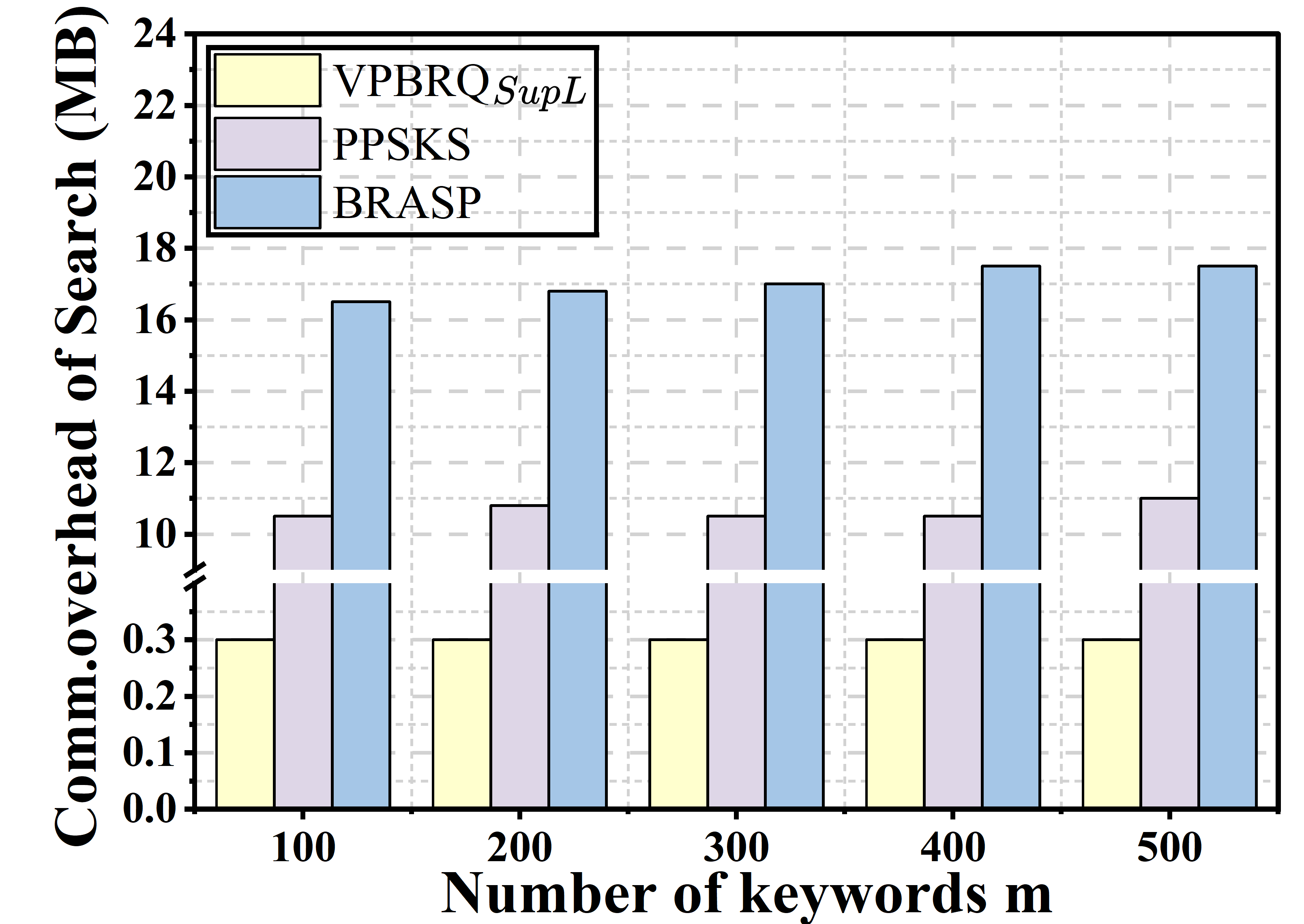}
        \hspace{-1pt}
        \subcaption{}
        \label{su4}
    \end{minipage}%
    \begin{minipage}[t]{0.32\linewidth}
        \centering
        \includegraphics[width=\linewidth]{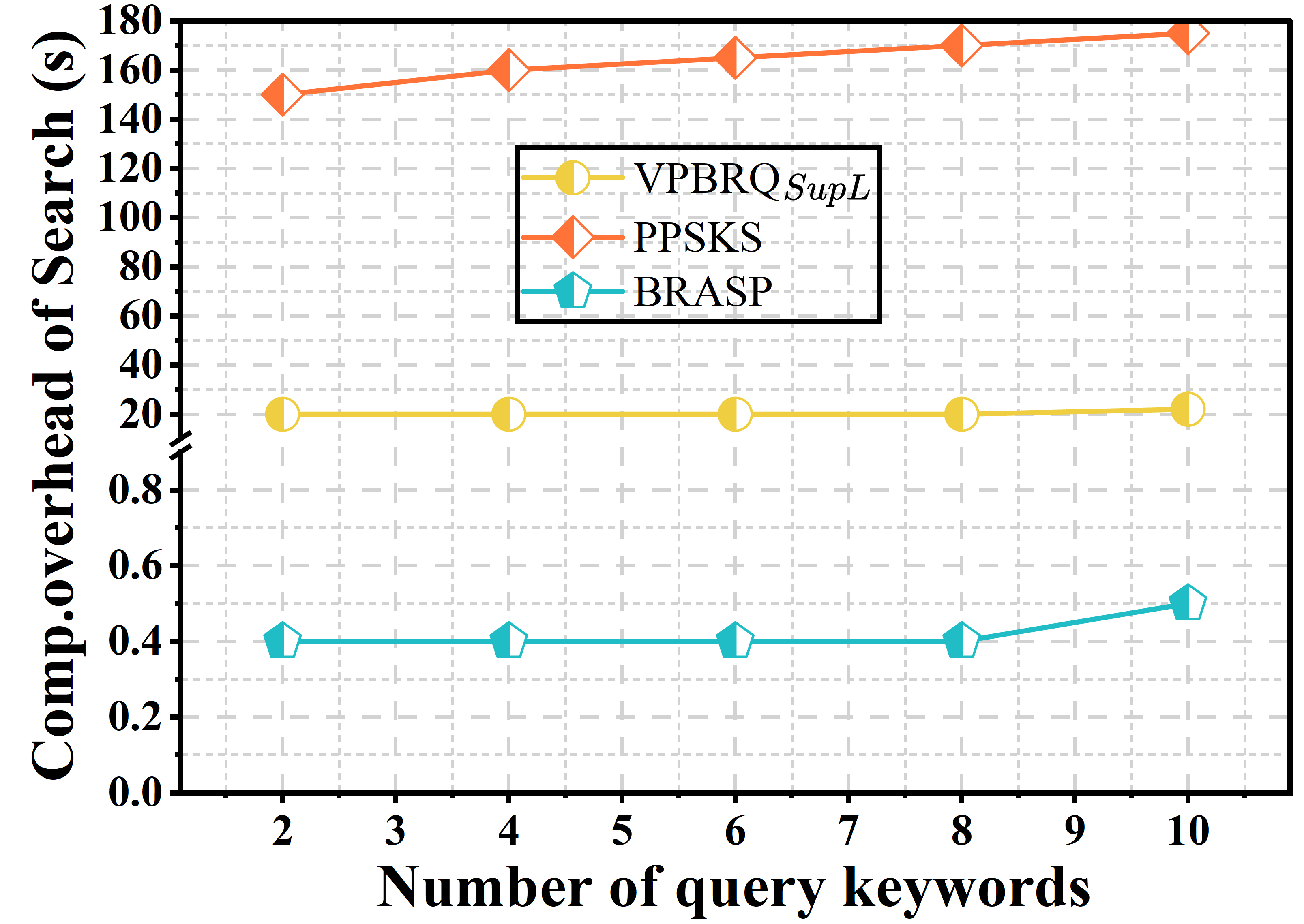}
        \hspace{-1pt}
        \subcaption{}
        \label{su5}
    \end{minipage}%
    \begin{minipage}[t]{0.32\linewidth}
        \centering
        \includegraphics[width=\linewidth]{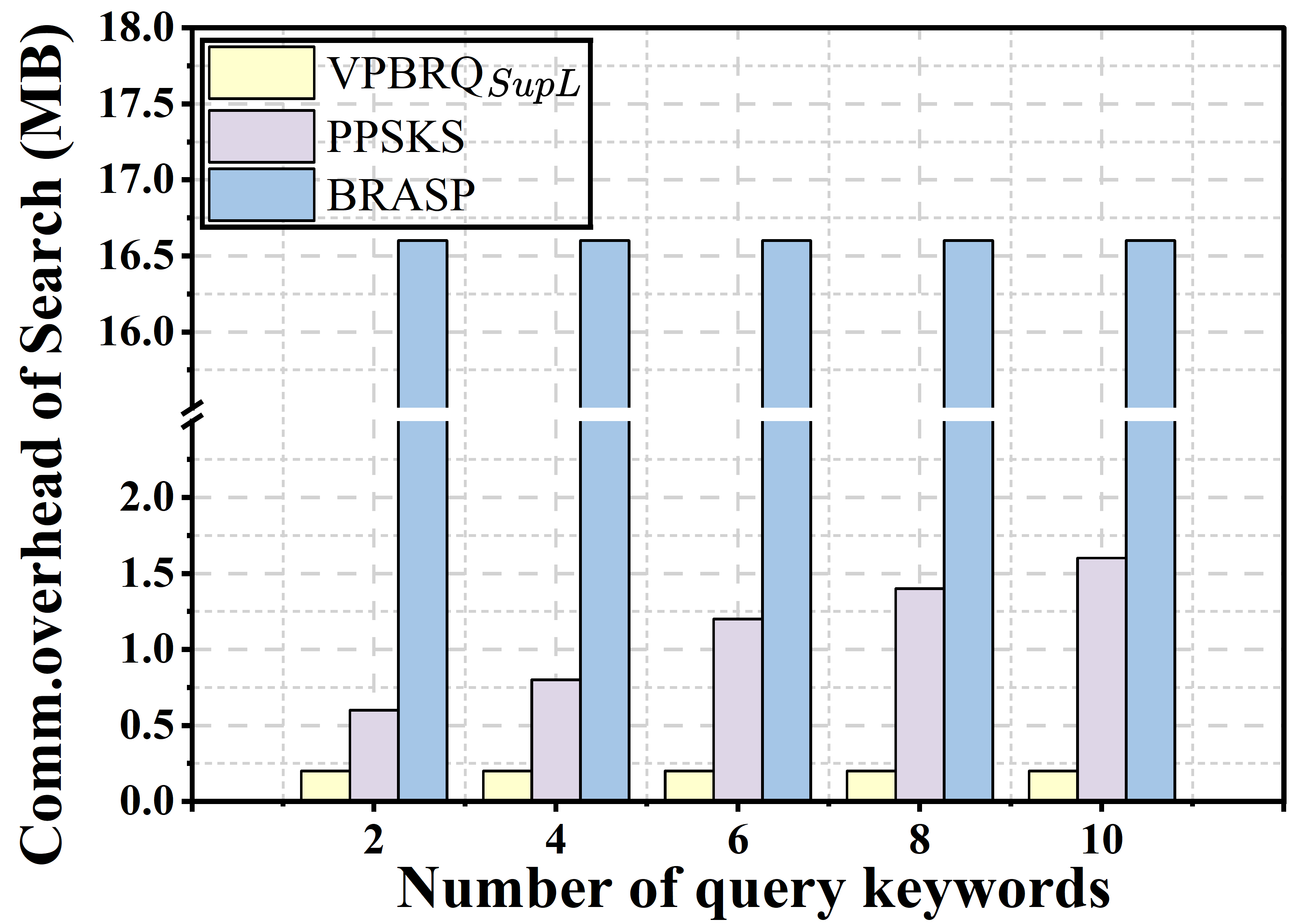}
        \hspace{-1pt}
        \subcaption{}
        \label{su6}
    \end{minipage}%
    \caption{Performance of Search.}
    \label{Search}
\end{figure*}

\paragraph{\textbf{Search.}}

Fig.~\ref{Search} compares the search performance of the three schemes.
For $\mathrm{VPBRQ_{SupL}}$, the reported search cost includes multi-cloud retrieval together with client-side verification and decryption.
For BRASP, the reported cost includes dual-server retrieval and the subsequent index-shuffle procedure used to hide search and access patterns.
Fig.~\ref{su1} and Fig.~\ref{su2} vary the number of spatial objects, Fig.~\ref{su3} and Fig.~\ref{su4} vary the number of indexed keywords, and Fig.~\ref{su5} and Fig.~\ref{su6} vary the number of query keywords.
Across all three workloads, BRASP achieves substantially lower computation overhead than the baselines.
Its communication overhead is moderately higher than that of the baseline schemes in some settings, which is mainly due to the extra inter-server interaction introduced by shuffling and result obfuscation.
Overall, the results show that BRASP significantly reduces the dominant computation cost of search while maintaining practical communication overhead.

\begin{figure}[t]
  \centering
  \centering
    \begin{minipage}[t]{0.45\linewidth}
        \centering
        \includegraphics[width=\linewidth]{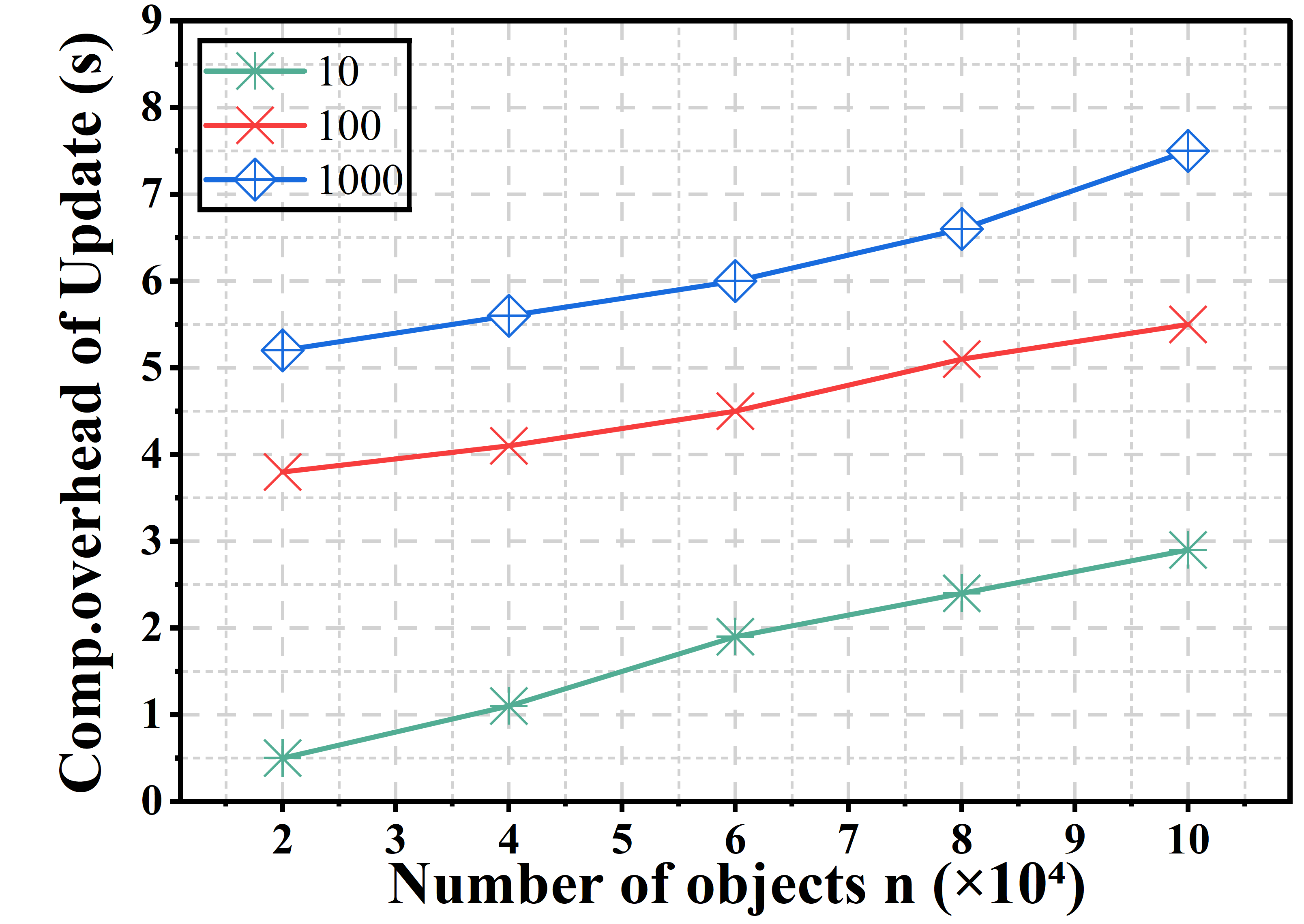}
        \subcaption{}
        \label{up1}
    \end{minipage}
    \begin{minipage}[t]{0.45\linewidth}
        \centering
        \includegraphics[width=\linewidth]{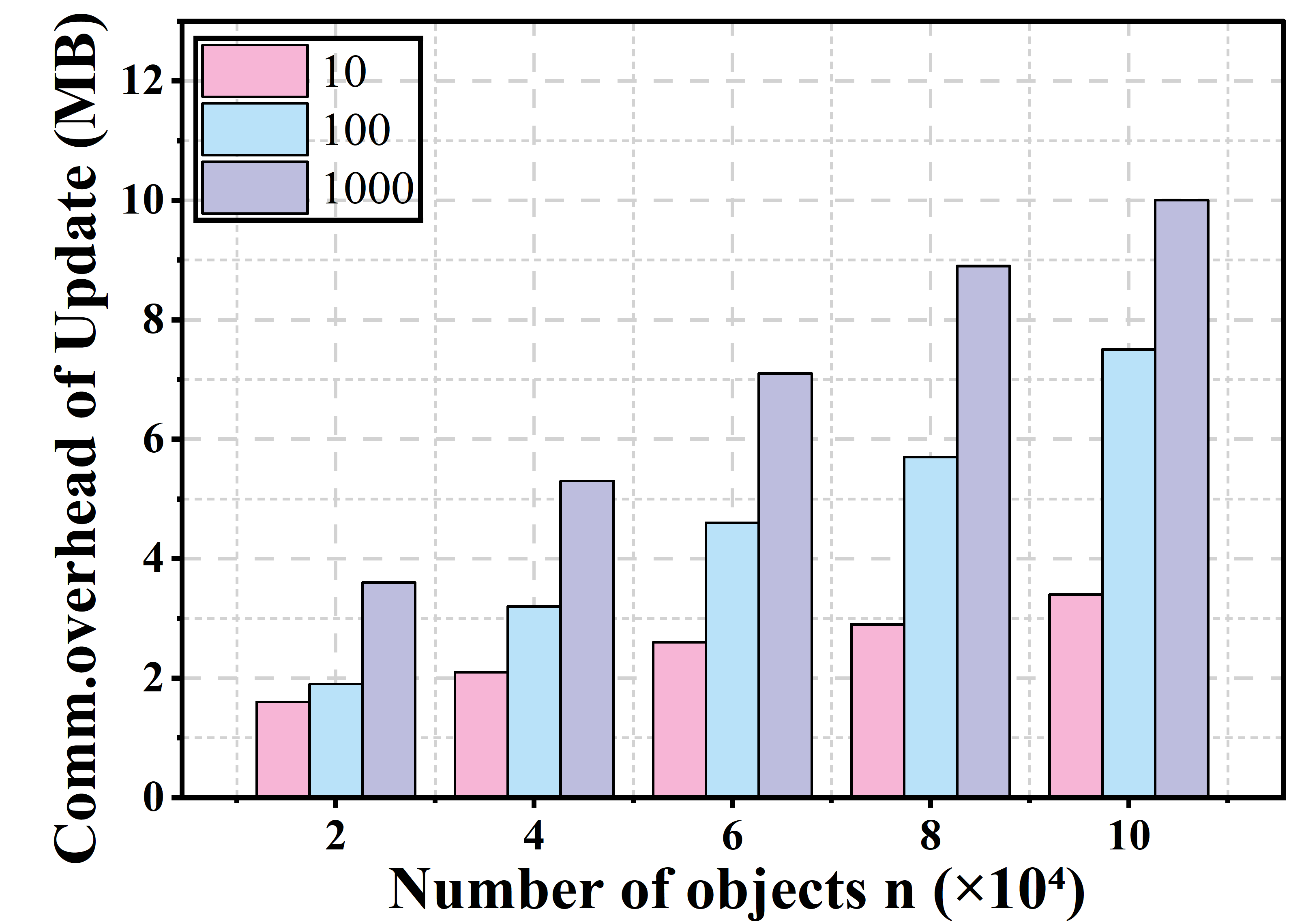}
        \subcaption{}
        \label{up2}
    \end{minipage}
  \caption{Performance of Update.}
  \label{update}
\end{figure}

\paragraph{\textbf{Update.}}
Fig.~\ref{update} reports the update overhead of BRASP under different database sizes and update workloads.
In Fig.~\ref{up1}, the three curves correspond to updating 10, 100, and 1000 objects, respectively, and the computation overhead increases with both the number of spatial objects and the number of updated objects.
Fig.~\ref{up2} reports the total communication overhead between $CS_1$ and $CS_2$ during the update process, which also increases as the database size and the update workload grow.
These results are consistent with the cost of refreshing encrypted bitmap shares and maintaining the shuffled encrypted indexes after each update.

\section{Conclusion}
\label{conclusion}

In this paper, we presented BRASP, a searchable encryption scheme for Boolean range queries over encrypted spatial data. BRASP combines Hilbert-curve-based spatial encoding, prefix-based range decomposition, and encrypted prefix--ID and keyword--ID indexes to support efficient query processing in the encrypted domain. To protect query privacy beyond data confidentiality alone, BRASP integrates dual-server index shuffling and ID-field redistribution to hide both the search pattern and the access pattern. The scheme further supports dynamic updates and achieves forward security, making it suitable for outsourced spatial databases whose contents evolve over time. We formalized the security goals of BRASP in terms of confidentiality, shuffle indistinguishability, query unforgeability, and forward security, and provided corresponding analyses. Experimental results on the Yelp dataset show that BRASP achieves practical performance across secure index building, token generation, search, and update operations, while substantially reducing search-side computation overhead.

%
%
%
\bibliographystyle{splncs04}
\bibliography{mybibliography}

\appendix

\section{Theoretical Analysis}
\label{theoretical_analysis}

We evaluate the performance of BRASP both theoretically and experimentally, making comparisons with $\mathrm{VPBRQ_{SupL}}$.

\begin{table*}[h]
    \centering
    \renewcommand\arraystretch{1.2}
    \caption{COMPARISON OF COMPUTATION AND COMMUNICATION COSTS }
    \resizebox{\textwidth}{!}{
    \begin{threeparttable}
    \begin{tabular}{c|c|c|c|c}
        \hline
        \multicolumn{2}{c|}{Costs} & $\mathrm{VPBRQ_{SupL}}$ \cite{tong2024beyond} & BRASP \\
        \hline
        \multirow{2}{*}{Encryted Index Build}
        & \textbf{\textit{Comp.costs}}  & {\small $n(\mathrm{T_f}+(s_1+s_2)(n\mathrm{T_{f_x}}+\mathrm{T_{H}})+n\mathrm{T_{F_c}})$ } & {\small $(m+p)(\mathrm{T_{TPF.Rnd}}+\mathrm{T_{TUR.Enc}})$ } \\
        & \textbf{\textit{Comm.costs}} & {\small $U(s_1+s_2)(n+1)\rho$ } & {\small $2(m+p)M$ } \\
        \hline
        \multirow{2}{*}{Index Shuffle}
        & \textbf{\textit{Comp.costs}} & {\small $-$ } & {\small $4[(p+m)(\mathrm{T_{TPF_{ReEnc}}}+\mathrm{{T_{TUR_{ReEnc}}+\mathrm{T_F})+\mathrm{T_P}]}}$ } \\
        & \textbf{\textit{Comm.costs}} & {\small $-$ } & {\small $2(m+p)(M+M_{tag})$ } \\
        \hline
        \multirow{2}{*}{Token Generation} 
        & \textbf{\textit{Comp.costs}}  & {\small $(NS_1+S_2)\mathrm{T_{DPF.Gen}}$ } & {\small $(m_q+h)(\mathrm{T_q}+\mathrm{T_{TPF.Rnd}})$ } \\
        & \textbf{\textit{Comm.costs}} & {\small $(NS_1+S_2)b$ } & {\small $2(m_q+h)$ } \\
        \hline
        \multirow{2}{*}{Search} 
        & \textbf{\textit{Comp.costs}}  & {\small $kU({Ns_{1}}^{'}+{s_{2}}^{'})\mathrm{T_{DPF.Eval}}+U(n+1)(N+1)T_I$} & {\small $4(m_q+h)(\mathrm{T_{TPF_{ReEnc}}}+\mathrm{T_{TUR_{PDec}}}+\mathrm{T_{TUR_{Dec}}}+\mathrm{T_{find}})$} \\
        & \textbf{\textit{Comm.costs}} & {\small $U(n+1)N\rho$ } & {\small $M_{CS_1}+M_{CS_2}$} \\
        \hline
        \multirow{2}{*}{Update} 
        & \textbf{\textit{Comp.costs}}  & {\small $-$ } & {\small $(w_o+p)(\mathrm{T_{TPF.Rnd}}+\mathrm{T_{TUR_{ReEnc}}}+\mathrm{T_{TPF_{ReEnc}}}+2\mathrm{T_P})$ } \\
        & \textbf{\textit{Comm.costs}} & {\small $-$ } & {\small $2(w_o+p)(M+M_{tag})$ } \\
        \hline    
    \end{tabular}
    \begin{tablenotes}
    \footnotesize
    \item \textbf{Notes}: 
    \item $T_f$: Time complexity of a single computation of a PRF $F$; 
    \item $s_1, s_2$: Length of Bloom Filter for storing spatial and textual information respectively; 
    \item $n$: Total number of objects in the dataset;
    \item $T_H$: Time complexity of one computation of HMAC; 
    \item $T_{F_c}$: Time complexity of a single computation of the prefix-constrained PRF $F_c$; 
    \item $U$: Number of Cloud Service Providers; 
    \item ${s_{1}}^{'}$, ${s_{2}}^{'}$: Number of segments for query range and query keyword set;
    \item $\rho$: Size of each Bloom Filter;
    \item $S_1, S_2$: Size of the Cuckoo hash table for the query range $R$ and the size of the Cuckoo hash table for the query keyword set $W^*$, respectively;
    \item $\mathrm{T_{DPF.Gen}}$: Time complexity of a single execution of the generative algorithm for DMPF;
    \item $\mathrm{T_{DPF.Eval}}$: Time complexity of a single execution of the evaluation algorithm of the DPF;
    \item $N$: Number of sub-ranges that the query range $R$ is decomposed into;
    \item $\mathrm{T_I}$: Time complexity of inner product computation;
    \item $k$: Number of hash functions used in PRP-based Cuckoo Hash;
    \item $b$: Size of a DPF share.
    \end{tablenotes}
    \end{threeparttable}
    }
    \label{tab}
\end{table*}

In this section, we present a detailed theoretical analysis of the computation and communication costs associated with our proposed scheme, BRASP, and compare it with the existing scheme $\mathrm{VPBRQ_{SupL}}$ \cite{tong2024beyond}.
The analysis focuses on key operations within the schemes: Encrypted Index Build, Index Shuffle, Token Generation, Search, and Update we will only consider some time-consuming operations $\mathrm{T_{TPF.Rnd}}$, $\mathrm{T_{TUR.Enc}}$, $\mathrm{T_{TPF_{ReEnc}}}$, $\mathrm{T_{TUR_{ReEnc}}}$, $\mathrm{T_F}$, $\mathrm{T_P}$, $\mathrm{T_{find}}$. 
The comparison is based on the theoretical costs outlined in TABLE \ref{tab}.

 \textbf{Computation costs}:
In Encrypted Index Build, DO  constructs and encrypts two indexes: the Prefix-ID index and the Keyword-ID index. 
Each index entry is encrypted using $TPF$ and $TUR$ techniques. 
Given $m$ keywords and $p$ prefix encodings (with p=384 in our scheme), the total computation cost for this phase is $(m+p)(\mathrm{T_{TPF.Rnd}}+\mathrm{T_{TUR.Enc}})$.
In Index Shuffle, $CS_2$ runs the algorithms $TPF.ReEnc$ and $TUR.ReEnc$ to re-encrypt the indexes stored on $CS_1$, and it re-randomizes the tags using a pseudorandom function $F$.
Upon receiving re-encrypted indexes, $CS_1$ re-encrypts these indexes again. 
The total computation cost for this process, considering both indexes stored on $CS_1$ and $CS_2$, is $4[(p+m)(\mathrm{T_{TPF_{ReEnc}}}+\mathrm{{T_{TUR_{ReEnc}}+\mathrm{T_F})+\mathrm{T_P}]}}$.
In Token Generation, the client generates search tokens for $m$ query keywords and $h$ query prefix encodings, which costs $\mathrm{T_q}$.
Then randomizes them using the algorithm $TPF.Rnd$, the total computation cost for this phase is $(m_q+h)(\mathrm{T_q}+\mathrm{T_{TPF.Rnd}})$.
In Search, the search operation involves querying the encrypted indexes stored on both $CS_1$ and $CS_2$.
Each CS performs the search independently, and the final search result is obtained by intersecting the results from both servers.
The search process includes re-encrypting the query tokens, partially decrypting the ID fields, and locating the relevant encrypted data. The total computation cost for this phase is $4(m_q+h)(\mathrm{T_{TPF_{ReEnc}}}+\mathrm{T_{TUR_{PDec}}}+\mathrm{T_{TUR_{Dec}}}+\mathrm{T_{find}})$.
In Update, this phase involves modifying the Keyword-ID and Prefix-ID indexes to reflect changes in the database.
This process requires re-encrypting the updated index entries and ensuring forward security. 
The total computation cost for updating the indexes is $(w_o+p)(\mathrm{T_{TPF.Rnd}}+\mathrm{T_{TUR_{ReEnc}}}+\mathrm{T_{TPF_{ReEnc}}}+2\mathrm{T_P})$.
Here, $w_o$ represents the number of updated keywords, and $p$ is the number of prefix encodings. 
The cost includes the re-encryption of updated entries and the additional operations required to maintain forward security.

\textbf{Communication costs}:
During the Encrypted Index Build phase, the communication cost for this phase is primarily due to the transmission of the encrypted indexes to the CSs, the total communication cost is $2(m+p)M$.
Here, $M$ represents the size of each encrypted index entry.
During the Index Shuffle phase, which involves communication between $CS_1$ and $CS_2$, the communication cost for this phase is $2(m+p)(M+M_{tag})$.
Here, $M_{tag}$ represents the size of the shuffle state tags. 
During the Token Generation phase, the communication cost for this phase is relatively low since the tokens are generated locally by the client and sent to the CSs.
The total communication cost is $2(m_q+h)$.
During the Search phase, the final search result is obtained by intersecting the results from both CSs.
The communication cost for this phase is $M_{CS_1}+M_{CS_2}$.
Here, $M_{CS_1}$ and $M_{CS_1}$ represent the sizes of the search results from $CS_1$ and $CS_2$, respectively. 
This cost accounts for the results transmitted between the client and the CSs during the search process.
During the Update phase, the client communicates with the CSs to update the indexes.
The communication cost for this phase is $2(w_o+p)(M+M_{tag})$.
Here, $w_o$ represents the number of updated keywords.

Compared to the existing scheme $\mathrm{VPBRQ_{SupL}}$, our proposed BRASP scheme achieves lower computation costs in most operations. 
Specifically, the Encrypted Index Build and Token Generation phases in BRASP are more efficient due to the optimized use of $TPF$ and $TUR$. 
Additionally, the Index Shuffle operation in BRASP is designed to minimize the computational overhead while ensuring robust privacy protection.
The Search and Update operations in BRASP also demonstrate improved efficiency, making it a more practical solution for privacy-preserving data queries in cloud environments.

\section{Proof of Theorem 1}
\label{proof_th1}

We prove confidentiality by a standard hybrid argument.

Let $\mathsf{View}_{CS_1}$ and $\mathsf{View}_{CS_2}$ denote the views of $CS_1$ and $CS_2$,
respectively, during the execution of BRASP. Because the two servers are assumed to be
non-colluding, it is sufficient to simulate the view of each server separately. Intuitively,
$CS_1$ observes the partially recovered search results together with the messages needed for
shuffling and updating, whereas $CS_2$ observes encrypted trapdoors, partial decryptions, and
shuffle-related messages. We show that both views can be simulated from the leakage function
collection $\mathcal L=(L^{\mathsf{Query}},L^{\mathsf{Update}})$.

\begin{itemize}
    \item \textbf{Game $G_0$ (Real Execution).}
    This is the real BRASP experiment. Therefore
    \[
    \Pr[G_0=1]=\Pr[\mathsf{Real}_{\mathcal A}(\zeta)=1].
    \]

    \item \textbf{Game $G_1$ (Simulating Encodings and Tags).}
    Replace the outputs of $TPF.\mathsf{Rnd}$, $TPF.\mathsf{ReEnc}$, and the tag-generation
    function $F$ with uniformly distributed strings of the correct length, while preserving the
    equality pattern implied by the leakage. Any distinguisher between $G_1$ and $G_0$ yields
    an adversary against the pseudorandomness of $TPF$ or $F$. Hence, for some PPT
    adversary $\mathcal B_1$,
    \[
    \left|\Pr[G_1=1]-\Pr[G_0=1]\right|
    \le
    \mathsf{Adv}^{\mathsf{prf}}_{TPF,\mathcal B_1}(\zeta)+
    \mathsf{Adv}^{\mathsf{prf}}_{F,\mathcal B_1}(\zeta).
    \]

    \item \textbf{Game $G_2$ (Simulating Re-randomized ID Fields).}
    Replace the re-randomized ID ciphertexts that appear during shuffling, search, and update
    with simulated ciphertexts of the correct format that are consistent with the leaked access
    information. Since the plaintext bitmaps are never exposed to a single cloud server, and
    only their leakage-consistent behavior matters, the adversary's view remains computationally
    indistinguishable from that in $G_1$.

    \item \textbf{Game $G_3$ (Simulating Search and Update Tokens).}
    Generate search tokens using only $L^{\mathsf{Query}}$ and update tokens using only
    $L^{\mathsf{Update}}$. Repeated queries and repeated updates are mapped to simulated values
    that preserve the permitted equality structure, while fresh events are assigned fresh simulated
    strings and ciphertexts. The resulting transcript is distributed exactly as in the ideal world
    defined by the simulator $\mathcal S$.
\end{itemize}

Combining the above hybrids, for every PPT adversary $\mathcal A$ there exists a PPT
simulator $\mathcal S$ and PPT adversaries $\mathcal B_1,\mathcal B_2$ such that
\begin{equation}
\left|
\Pr[\mathsf{Real}_{\mathcal A}(\zeta)=1]-
\Pr[\mathsf{Ideal}_{\mathcal A,\mathcal S}(\zeta)=1]
\right|
\le
\mathsf{Adv}^{\mathsf{prf}}_{TPF,\mathcal B_1}(\zeta)+
\mathsf{Adv}^{\mathsf{prf}}_{F,\mathcal B_2}(\zeta)+
\mathsf{negl}(\zeta).
\end{equation}
If $TPF$ and $F$ are secure pseudorandom functions, the right-hand side is negligible in
$\zeta$. Therefore BRASP is $\mathcal L$-confidential against adaptive chosen-keyword attacks.

\section{Proof of Theorem 2}
\label{proof_th2}

We prove shuffle indistinguishability by analyzing the view of a single honest-but-curious
cloud server during one shuffle round; the argument for the other server is identical.

Let
\[
e_i=(x_i,ID_i,tag_i)
\]
be an encrypted index entry before shuffling, where $x_i$ denotes either a keyword encoding or
a prefix encoding. After one shuffle round, the corresponding entry takes the form
\[
e'_j=(x'_j,ID'_j,tag'_j),
\]
with
\[
\begin{aligned}
x'_j   &= TPF.\mathsf{ReEnc}(TPF.\mathsf{ReEnc}(x_i,r_2),r_1),\\
ID'_j  &= TUR.\mathsf{ReEnc}(TUR.\mathsf{ReEnc}(ID_i,pk_M^{ID}),pk_M^{ID}),\\
tag'_j &= F(F(tag_i,r_2),r_1).
\end{aligned}
\]

Consider a challenge experiment in which the adversary is given two pre-shuffle entries
$e_0,e_1$, a shuffled challenge entry $e^\star$, and must decide whether $e^\star$ originates
from $e_0$ or from $e_1$. The adversary additionally sees the shuffled collection after the final
random permutation.

Because $TPF$ and $F$ are secure pseudorandom functions, the distributions of $x'_j$ and
$tag'_j$ are computationally indistinguishable from fresh random strings to any party that does
not know the shuffle randomness. Moreover, by the assumed unlinkability of
$TUR.\mathsf{ReEnc}$, the ciphertext $ID'_j$ is computationally indistinguishable from a fresh
ciphertext of the same bitmap and therefore does not reveal which pre-shuffle entry it came
from. Finally, the random permutation removes positional information.

Hence any adversary that identifies the predecessor of $e^\star$ with non-negligible advantage
would either distinguish the outputs of $TPF$ or $F$ from random, violate the unlinkability of
$TUR.\mathsf{ReEnc}$, or exploit positional information that is eliminated by the permutation.
Therefore
\[
\Pr[b'=b] \le \frac{1}{2}+\mathsf{negl}(\zeta),
\]
which proves Theorem~2.

\section{Proof of Theorem 3}
\label{proof_th3}

Assume that there exists a PPT adversary $\mathcal A$ that outputs a valid search token for a
Boolean range query that has never been issued by an honest client. We show that such an
adversary can be transformed into an algorithm that breaks the collision resistance of $TPF$.

For a keyword $w_k$ and a prefix $h_k$, honest tokens are generated as
\[
T_w = TPF.\mathsf{Rnd}(k_u\cdot(r_1r_2)^{U_{w_k}},w_k), \qquad
T_h = TPF.\mathsf{Rnd}(k_u\cdot(r_1r_2)^{U_{h_k}},h_k).
\]
After the cloud applies the authorization re-encryption step, these tokens are mapped to the
current encrypted index state. Therefore, a forged token is accepted only if it matches the
image of some valid token under the corresponding shuffled state.

Now consider the first successful forgery output by $\mathcal A$. Since the forged query has never
been issued before, one of the following must occur:
\begin{enumerate}
    \item the forged token collides with the image of an honestly generated token derived from a
    different query component or a different shuffle state; or
    \item the adversary produces a fresh accepted image for a secret-key-dependent input that it
    has never obtained from an honest execution.
\end{enumerate}
The first event directly gives a collision in the keyed $TPF$ image space. The second event is
precisely the type of event ruled out by the collision-resistant keyed encoding used by $TPF$,
because acceptance requires consistency with an existing encrypted-index entry after the
authorization re-encryption step.

Consequently, any non-negligible forgery advantage of $\mathcal A$ yields a non-negligible
advantage for an algorithm that violates the collision resistance of $TPF$. Hence the probability
that a PPT adversary forges a valid unseen search token is negligible, and BRASP achieves query
unforgeability.

\section{Proof of Theorem 4}
\label{proof_th4}

We consider an adversary controlling one of the two cloud servers and ask whether it can use
the transcript observed before an update to determine whether a newly inserted object would
have matched any earlier query.

For a newly inserted object, the client generates fresh encrypted bitmap shares together with
fresh update tokens under the \emph{current} shuffle state. If a keyword or prefix is new, the
corresponding update entry is inserted as a fresh encrypted item. If it already exists, the update
token is first mapped through the current tag state and is then merged into the current shuffled
index. In either case, the associated bitmap shares are encrypted under $TUR$.

After the update, subsequent shuffle rounds re-randomize those ciphertexts again. Therefore, a
cloud server that only saw the pre-update search transcript cannot decide whether a new entry is
related to a pre-update query unless it can perform at least one of the following attacks:
\begin{enumerate}
    \item correlate update tokens across different shuffle states by reversing or colliding the
    state-dependent $TPF$ derivation; or
    \item link a re-randomized $TUR$ ciphertext to a ciphertext observed before the update.
\end{enumerate}
The first event occurs only if the adversary breaks the collision resistance of $TPF$, and the
second occurs only if the adversary breaks the unlinkability of re-randomized $TUR$
ciphertexts. Under these assumptions, both events have at most negligible probability.

Hence the insertion of a new object does not reveal whether that object would have matched any
query issued before the update, and BRASP achieves forward security.

\section{Additional Experimental Settings}
\label{app:exp_settings}

This appendix provides additional details for the experimental methodology used in
Section~\ref{experiment}. The goal is to make the evaluation setup more explicit and
to clarify how the parameters in Fig.~\ref{fig:index}--Fig.~\ref{update} are instantiated.

\paragraph{Dataset and data preparation.}
We construct the experimental dataset from the
\texttt{yelp\_academic\_dataset\_business.json} subset of the Yelp academic dataset.
The resulting benchmark is treated as a spatio-textual database in which each record
contains a spatial location together with an associated keyword set.
The implementation used to preprocess the dataset, build the encrypted indexes, run
the queries, and generate the plots is publicly available at
\url{https://github.com/Egbert-Lannister/BRASP}.

\paragraph{Implementation environment.}
All experiments are implemented in Python~3.12 and run on a 64-bit Windows~11 machine
equipped with 16~GB RAM and an AMD Ryzen~5 3500U CPU with Radeon Vega Mobile Graphics
at 2.10~GHz.
For each experiment, we report the measured computation overhead and communication
overhead produced by the implementation under the specified parameter setting.
For each parameter configuration, we repeated the experiment \textbf{10 times} and report the mean computation and communication overheads. To ensure clarity and visual readability of the figures, we plot only the mean values and omit error bars.

\paragraph{Workload parameters.}
The evaluation varies four main parameters:
\begin{itemize}
    \item $n$: the number of spatial objects in the outsourced database;
    \item $m$: the number of indexed keywords;
    \item $|W_q|$: the number of query keywords in a Boolean range query;
    \item $w_o$: the number of objects inserted during an update workload.
\end{itemize}
The specific parameter ranges follow the settings shown in the figures:
\begin{itemize}
    \item In Fig.~\ref{fig:index}(a)--(b), Fig.~\ref{su1}--Fig.~\ref{su2}, and
    Fig.~\ref{up1}--Fig.~\ref{up2}, the database size varies as
    $n \in \{2,4,6,8,10\}\times 10^4$.
    \item In Fig.~\ref{fig:index}(c)--(d) and Fig.~\ref{su3}--Fig.~\ref{su4}, the number
    of indexed keywords varies as $m \in \{100,200,300,400,500\}$.
    \item In Fig.~\ref{trap1}--Fig.~\ref{trap2} and Fig.~\ref{su5}--Fig.~\ref{su6}, the
    number of query keywords varies as $|W_q| \in \{2,4,6,8,10\}$.
    \item In Fig.~\ref{up1}--Fig.~\ref{up2}, the update workload size is set to
    $w_o \in \{10,100,1000\}$, corresponding to the three curves (or bar groups) shown
    in Fig.~\ref{update}.
\end{itemize}

\paragraph{Experiment-by-experiment setup.}
The four groups of experiments are configured as follows.
\begin{itemize}
    \item \textbf{Secure index building:}
    We evaluate the cost of encrypting the prefix--ID and keyword--ID indexes as the
    database size $n$ or the keyword vocabulary size $m$ increases.
    \item \textbf{Token generation:}
    We evaluate the cost of generating keyword trapdoors and prefix trapdoors while
    varying the query-keyword cardinality $|W_q|$.
    \item \textbf{Search:}
    We evaluate both computation and communication overhead while varying $n$, $m$, and
    $|W_q|$ separately.
    For BRASP, the reported search cost includes the dual-server retrieval procedure
    together with the subsequent index-shuffle step used to hide the search and access
    patterns.
    \item \textbf{Update:}
    We evaluate the update overhead under different database sizes $n$ and insertion
    workloads $w_o$.
    The reported cost includes refreshing the encrypted bitmap shares and maintaining
    the shuffled encrypted indexes after the update.
\end{itemize}

\paragraph{Metrics.}
The evaluation reports two metrics.
The \emph{computation overhead} records the measured running time of the corresponding
operation, and the \emph{communication overhead} records the total amount of data
transmitted during that operation.
The same pair of metrics is used throughout Fig.~\ref{fig:index}--Fig.~\ref{update}
to facilitate a consistent comparison across different phases.

\paragraph{Reproducibility note.}
The exact scripts used for preprocessing, workload generation, execution, and figure
plotting are included in the public BRASP implementation.
Accordingly, the appendix is intended to summarize the workload configuration used in
the paper, while the released code serves as the reference source for implementation-level
details.

\end{document}